\documentclass{article}

\usepackage[utf8]{inputenc}
\usepackage[T1]{fontenc}
\usepackage[margin=1.1in]{geometry}
\usepackage{amsmath,amssymb,amsthm,mathtools}
\usepackage{booktabs}
\usepackage{enumitem}
\usepackage{tikz}
\usepackage{placeins}
\usepackage[hidelinks]{hyperref}

\numberwithin{equation}{section}
\theoremstyle{plain}
\newtheorem{theorem}{Theorem}[section]
\newtheorem{proposition}[theorem]{Proposition}
\newtheorem{lemma}[theorem]{Lemma}
\newtheorem{corollary}[theorem]{Corollary}

\newtheorem{question}[theorem]{Question}
\theoremstyle{definition}
\newtheorem{definition}[theorem]{Definition}
\newtheorem{example}[theorem]{Example}
\theoremstyle{remark}
\newtheorem*{remark}{Remark}

\newcommand{\F}{\mathbb{F}}
\newcommand{\A}{\mathcal{A}}
\newcommand{\cP}{\mathcal{P}}
\newcommand{\Sym}{\mathrm{Sym}}
\newcommand{\stab}{\mathrm{stab}}
\newcommand{\orb}{\mathrm{orb}}
\newcommand{\Par}{\mathrm{Par}}
\newcommand{\GL}{\mathrm{GL}}
\newcommand{\CT}{\mathit{CT}}
\newcommand{\soph}{\mathrm{soph}}
\newcommand{\hGL}{h^{\GL}}
\newcommand{\haff}{h^{\mathrm{aff}}}
\newcommand{\hstr}{h^{\mathrm{str}}}

\title{CAS II: Symmetric Partitions as Kolmogorov Models}
\author{Romie Banerjee}
\date{September 2026}

\begin{document}
\maketitle

\begin{abstract}
In algorithmic statistics a string $x$ is explained by a finite set
containing it, and Kolmogorov's structure function records the smallest
such model at each level of complexity. The strong models of
Vereshchagin, those computable from the data by a total algorithm, are
essentially the cells of simple partitions. So a partition of
$\{0,1\}^n$ can be read as a \emph{hypothesis}, and the cell containing
$x$ as the model it assigns. We develop algorithmic statistics over
\emph{symmetric} partitions, the orbit partitions of groups acting on
strings.

The Galois connection between subgroups and partitions assigns to each
ambient group $G$ a lattice $\Par_G$ of symmetric partitions. Each has a
canonical certificate whose cost equals that of the partition, and
hypotheses can be combined by joins and meets. The resulting structure
function $h^G_x$ and symmetric sophistication $\soph^G(x)$ measure
\emph{which part of the regularity of $x$ is symmetric}.

For $G=\Sym(\{0,1\}^n)$ every partition is symmetric. Cells then recover
all Kolmogorov models, and cells of cheap partitions recover exactly the
strong models, so normal and strange strings are characterised by
symmetry. For $G=\GL(n,2)$ the cells are exactly the \emph{linearly
homogeneous} sets, which turns linear symmetry into a restricted model
class. For every nonzero $x$, $h^{\GL}_x$ lies in a band between
$C(x)-\alpha$ and $n-\alpha$, and both edges are attained. In
particular, there are stochastic normal strings whose simple structure is
entirely invisible to linear symmetry: $\soph(x)\approx0$ but
$\soph^{\GL}(x)\approx C(x)$.

Finally, we coordinatise the space of permutation groups. Each group is
an element of a Burnside ring (its \emph{type}) together with a
permutation (its \emph{placement}), and restriction moves refine
partitions cell by cell via the Mackey formula. In these coordinates the
collapse for $\Sym$ is a statement about placement, a linear hypothesis
is determined by its type up to $n^2$ bits, and the maximal gap theorem
shows that any space of symmetry hypotheses small enough to search is
small enough to miss simple structure.
\end{abstract}

\section{Introduction}

\subsection{Models, strong models and partitions}

Kolmogorov proposed to explain a string $x$ by a finite set $S\ni x$, a
\emph{model}, and to measure the explanation by the two-part code
$C(x)\lesssim C(S)+\log|S|$: first the model, then the position of $x$
in it, which is treated as noise~\cite{K74,GTV,VV04,VS17}. The structure
function $h_x(\alpha)$ records the smallest model affordable at
complexity $\alpha$, and a model is a sufficient statistic when the two
parts together cost no more than $C(x)$.

Arbitrary finite sets are too generous as models. The universal models
of G\'acs, Tromp and Vit\'anyi~\cite{GTV} can be sufficient, yet they
encode the number of strings of bounded complexity, so they cannot be
found from the data. Vereshchagin~\cite{V15} therefore singled out the
\emph{strong} models: those computable from $x$ by a total algorithm.
A total algorithm $x'\mapsto A(x')$ assigns a model to \emph{every}
string, and Milovanov~\cite{M16} observed that this makes strong models
essentially the \emph{cells of simple partitions} of $\{0,1\}^n$. So
beneath algorithmic statistics lies a theory of partitions. A partition
$p$ is a \emph{hypothesis}: a classification of all strings into classes
of strings the hypothesis cannot tell apart. The model for $x$ is its
cell, and the description becomes
\begin{equation}\label{eq:threepart}
  C(x)\ \lesssim\ \underbrace{C(p)}_{\text{hypothesis}}
  \ +\ \underbrace{C(B\mid p)}_{\text{which cell}}
  \ +\ \underbrace{\log|B|}_{\text{noise}},
  \qquad x\in B\in p .
\end{equation}

\subsection{Symmetric partitions}

Which partitions are \emph{structured} hypotheses? The most natural
source of classifications in mathematics and physics is symmetry: two
strings are indistinguishable if some transformation from a group maps
one to the other. The hypothesis is then the \emph{orbit partition} of a
group $U$ acting on strings, and the size of the cell of $x$ is governed
by its symmetry. By the orbit--stabiliser theorem,
\[
  \log|U\cdot x|=\log|U|-\log|\stab_U(x)|=-\log\Pr_{u\in U}[ux=x],
\]
so strings with large stabilisers lie in small cells and receive short
descriptions in~\eqref{eq:threepart}. This is the precise sense in which
symmetric objects are simple: a symmetry hypothesis explains $x$ well
when it is cheap and $x$ is fixed by a large part of it.

Different groups can have the same orbits and then make the same
hypothesis, so the group is best seen as a \emph{certificate} that a
partition arises from symmetry. The classical Galois connection between
subgroups and partitions~\cite{Ke99} makes this precise. For an ambient
group $G$ acting on strings it singles out a lattice $\Par_G$ of
\emph{symmetric partitions}. Each symmetric partition has a canonical
certificate, the largest subgroup with those orbits, which costs exactly
as much as the partition. Hypotheses can be combined by joins and meets,
and conjugation acts as relabelling. The theory of this paper is
algorithmic statistics restricted to cells of symmetric partitions.

The choice of ambient group matters. For $G=\Sym(\{0,1\}^n)$ every
partition is symmetric, and symmetry adds nothing to the theory of
strong models. Its certificates turn out to be structurally trivial
symmetries placed in complicated ways (Section~\ref{sec:sym}). At the
other extreme, the orbit partition of the stabiliser of $x$ in a large
group has $\{x\}$ as a cell: a lossless description that costs $C(x)$. The informative cases
lie in between and arise from structured ambient groups, above all the
group $\GL(n,2)$ of linear symmetries of $\F_2^n$.

\subsection{The question and the results}

$C(x)$ can always be recovered from a symmetric partition at full price,
so the question is not \emph{whether} but \emph{how}: \emph{which part of
the regularity of $x$ is captured by cells of symmetric partitions, and
at what cost?} We measure this by the structure function $h^G_x$ over
$\Par_G$, and summarise it by the symmetric sophistication $\soph^G(x)$,
the least cost at which a cell of a symmetric partition is a sufficient
statistic.

\begin{enumerate}[label=(\arabic*)]
\item \emph{Framework} (Sections~\ref{sec:galois}--\ref{sec:partsf}).
The Galois connection between subgroups and partitions: symmetric
partitions, their canonical certificates and costs, an algebra of joins
and meets, and relabelling. Partition structure functions come with a
dictionary between the group and partition formulations.
\item \emph{All permutations} (Section~\ref{sec:sym}). Every partition
is symmetric. Cells of symmetric partitions recover all Kolmogorov
models, and cells of cheap ones recover exactly the strong models. The
certificates realising arbitrary models have trivial type: all their
information sits in the placement.
\item \emph{Linear symmetry} (Section~\ref{sec:GL}). The cells of
$\Par_{\GL}$ are the linearly homogeneous sets, but whole symmetric
partitions are rigid. A linear hypothesis is determined by its type up to
$n^2$ bits.
\item \emph{A band and its extremes}
(Sections~\ref{sec:band}--\ref{sec:extremes}). For nonzero $x$,
$C(x)-\alpha\le\hGL_x(\alpha)\le n-\alpha$, both edges are attained, and
there are stochastic normal strings with $\soph(x)\approx0$ but
$\soph^{\GL}(x)\approx C(x)$: their regularity is simple and computable,
yet no linear symmetry captures any of it. A figure places all the
structure functions of the paper in one picture.
\item \emph{Coordinates and search} (Section~\ref{sec:coords}). Every
permutation group is a Burnside ring element (its \emph{type}) together
with a permutation (its \emph{placement}), which gives coordinates on the
spaces of permutation groups and symmetric partitions, and moves for
navigating them. The maximal gap theorem limits what any search in these
coordinates can achieve: coordinates small enough to search are small
enough to miss simple structure.
\end{enumerate}

\section{Preliminaries}

\subsection{Structure functions and sophistication}

$C$ is plain Kolmogorov complexity and $\CT(y\mid x)$ the \emph{total}
conditional complexity: the length of a shortest program that maps $x$
to $y$ and halts on every input. Strings have length $n$. Statements
said to hold ``up to $O(\log n)$'' hold up to additive $O(\log n)$ terms
in all coordinates; $c$ denotes a sufficiently large constant.

\begin{definition}
The \emph{structure function} of $x\in\{0,1\}^n$ is
$h_x(\alpha)=\min\{\log|S| : x\in S\subseteq\{0,1\}^n,\ C(S)\le\alpha\}$
for $\alpha\ge c\log n$. A model is \emph{sufficient} if
$C(S)+\log|S|\le C(x)+c\log n$, and the \emph{sophistication}
$\soph(x)=\min\{\alpha : h_x(\alpha)+\alpha\le C(x)+c\log n\}$ is the
complexity of a minimal sufficient statistic~\cite{Ko87,GTV,VV04}. The
string is \emph{stochastic} if $\soph(x)=O(\log n)$, and
\emph{antistochastic} if $h_x(\alpha)\approx n-\alpha$ for all
$\alpha<C(x)$~\cite{M15}.
\end{definition}

\begin{lemma}[\cite{VV04,SUV}]\label{lem:basic}
For $x\in\{0,1\}^n$ with $C(x)=\kappa$, up to $O(\log n)$:
(i) $h_x(\alpha)\ge\kappa-\alpha$; (ii) $h_x(c\log n)\le n$ and
$h_x(\kappa)=0$; (iii) (\emph{halving law}) $h_x(\alpha)+\alpha$ is
non-increasing. Conversely~\cite{VV04}, every simple function with
(i)--(iii) is the structure function of some $x$ up to $O(\log n)$; such
functions are called \emph{admissible}.
\end{lemma}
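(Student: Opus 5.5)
The three properties (i)--(iii) follow from short two-part-code arguments. The converse is the realisation theorem of Vereshchagin and Vit\'anyi, and I would cite it rather than reprove it.

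For (i), take a model $S\ni x$ with $C(S)\le\alpha$ and $\log|S|=h_x(\alpha)$. Then $x$ is described by a shortest program for $S$ followed by the index of $x$ in $S$, written in $\lceil\log|S|\rceil$ bits. The two parts are separated by a self-delimiting encoding of the length of the first, costing $O(\log n)$. This gives $\kappa\le\alpha+h_x(\alpha)+O(\log n)$, which is (i).

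For (ii), the set $\{0,1\}^n$ has complexity $O(\log n)$ and size $2^n$, so $h_x(c\log n)\le n$. The singleton $\{x\}$ has complexity $\kappa+O(1)$, so $h_x(\kappa+O(1))=0$. This is the second claim up to the allowed shift in $\alpha$.

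For (iii), the halving law says the following: if $S\ni x$ has $C(S)\le\alpha$ and $\log|S|=m$, then for each $i\le m$ there is $S'\ni x$ with $C(S')\le\alpha+i+O(\log n)$ and $\log|S'|\le m-i$.
\begin{itemize}
\item Order $S$ lexicographically and cut it into $2^i$ consecutive blocks of size at most $2^{m-i}$.
\item Let $S'$ be the block containing $x$. It is determined by $S$, the number $i$, and the $i$-bit block index.
\end{itemize}
Hence $h_x(\alpha+i)+(\alpha+i)\le h_x(\alpha)+\alpha+O(\log n)$. Here "non-increasing" is meant up to $O(\log n)$, which is how the statement is to be read.

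The converse is the genuinely hard direction: given an admissible simple $h$, one must construct $x$ whose structure function is $h$ up to $O(\log n)$. This is Theorem~IV.4 in \cite{VV04}, and I would quote it as stated. If a sketch were wanted, I would follow their construction: take $x$ to be the lexicographically first string not covered by any set from the enumeration of sets of complexity $\le\alpha$ and size $<2^{h(\alpha)}$, at every breakpoint of $h$. A counting argument shows such strings exist, since admissibility bounds the total covered measure. The main obstacle is showing $C(x)\approx\kappa$ via the enumeration-halting-count trick, and I would defer this entirely to the citation.
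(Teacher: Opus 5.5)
Your arguments for (i)--(iii) are the standard two-part-code, trivial-model and block-splitting proofs from the cited sources; the paper itself gives no proof beyond citing \cite{VV04,SUV}, and it defers the converse to \cite{VV04} exactly as you do. One remark on your optional sketch of the converse: $C(x)\approx\kappa$ is the routine part (the halting count gives $C(x)\lesssim\kappa$, and avoiding cheap singletons gives $C(x)\gtrsim\kappa$); the substantive work is the matching upper bound $h_x(\alpha)\lesssim h(\alpha)$ at intermediate levels, i.e.\ exhibiting models of complexity about $\alpha$ and size about $2^{h(\alpha)}$ containing $x$, but this does not affect your proposal since you cite the result.
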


Figure~\ref{fig:sf} shows a typical structure function. The curve
starts near $n$, where the only affordable model is essentially
$\{0,1\}^n$, and reaches $0$ at $\alpha\approx C(x)$, where $\{x\}$ is
affordable. It never lies below the \emph{sufficiency line}
$C(x)-\alpha$ (Lemma~\ref{lem:basic}(i)), and by the halving law it falls
by at least one bit per additional bit of model complexity, possibly in
sudden drops. It first meets the sufficiency line at the minimal
sufficient statistic, of complexity $\soph(x)$, and follows the line
from there on. The region above the curve is the \emph{profile}
$P_x$ of all pairs $(\alpha,\log|S|)$ achieved by models $S\ni x$. For a
stochastic string the curve drops to the sufficiency line almost at once;
for an antistochastic string it follows $n-\alpha$ until $\alpha\approx
C(x)$.

\begin{figure}[t]
\centering
\begin{tikzpicture}[x=1.65cm,y=0.46cm]
\fill[blue!8] (0,10) -- (1.5,8.5) -- (1.5,6.5) -- (2.5,5.5) -- (2.5,3.5)
  -- (6,0) -- (7,0) -- (7,10) -- cycle;
\draw[->] (0,0) -- (7.3,0) node[below left]{$\alpha$};
\draw[->] (0,0) -- (0,10.8) node[above right]{$\log|S|$ of best model};
\draw[dashed,gray] (0,6) -- (6,0);
\draw[dotted,gray] (2.5,0) -- (2.5,3.5);
\draw[very thick,blue!70!black] (0,10) -- (1.5,8.5) -- (1.5,6.5)
  -- (2.5,5.5) -- (2.5,3.5) -- (6,0) -- (7,0);
\fill[blue!70!black] (2.5,3.5) circle (2.2pt);
\node[left] at (0,10) {$n$};
\node[left] at (0,6) {$C(x)$};
\node[below] at (0,0) {$0$};
\node[below] at (2.5,0) {$\soph(x)$};
\node[below] at (6,0) {$C(x)$};
\node[right] at (1.55,7.6) {$h_x(\alpha)$};
\node[right] at (2.6,3.7) {\small minimal sufficient statistic};
\node at (5,7.2) {\small profile $P_x$};
\node[gray!60!black,align=left,anchor=west] at (0.1,1.8) {\small sufficiency line\\ \small $C(x)-\alpha$};
\draw[gray!60!black,->,shorten >=2pt] (0.55,2.6) -- (0.8,5.1);
\end{tikzpicture}
\caption{A typical structure function (schematic, after~\cite{VV04}). It
lies above the sufficiency line, decreases with slope at most $-1$, and
touches the line first at the minimal sufficient statistic.}
\label{fig:sf}
\end{figure}

\FloatBarrier
\subsection{Strong models and partitions}

\begin{definition}[\cite{V15,M16}]
A model $A\ni x$ is \emph{$\varepsilon$-strong} if
$\CT(A\mid x)\le\varepsilon$, and
$\hstr_{x,\varepsilon}(\alpha)=\min\{\log|A| : x\in A,\
\CT(A\mid x)\le\varepsilon,\ C(A)\le\alpha\}$. A string is
\emph{$(\varepsilon,\delta)$-normal} if
$\hstr_{x,\varepsilon}(\alpha+\delta)\le h_x(\alpha)+\delta$ for all
$\alpha$, and \emph{strange} if it has no strong minimal sufficient
statistic.
\end{definition}

Strong models exclude the universal models of~\cite{GTV}, which carry
the information of the number of strings of bounded
complexity~\cite{V15}. Strange strings exist~\cite{V15}; normal strings
exist for every admissible shape~\cite{M16}; antistochastic strings are
normal~\cite{M16}. The following lemma is the bridge to partitions.

\begin{lemma}[Partition lemma~\cite{M16}]\label{lem:partition}
If $A\ni x$ is $\varepsilon$-strong, there is a partition $p$ of
$\{0,1\}^n$ with $C(p)\le\varepsilon+O(\log n)$ and a cell
$A_1\ni x$ of $p$ with $A_1\subseteq A$ and
$C(A_1)\le C(A)+\varepsilon+O(\log n)$.
\end{lemma}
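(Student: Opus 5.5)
Let $q$ be a shortest total program with $q(x)=A$, so $|q|=\CT(A\mid x)\le\varepsilon$. Since $q$ halts everywhere, it defines a map $x'\mapsto q(x')$ from $\{0,1\}^n$ to finite sets, and this map is computable from $q$ and $n$. The natural first guess is the fibre partition $\{q^{-1}(B)\}$ of this map. It has complexity at most $\varepsilon+O(\log n)$, but its cell $q^{-1}(A)$ need not lie inside $A$, because $q(x')$ need not contain $x'$. So we first repair the map. Define $q'(x')=q(x')$ if $x'\in q(x')$ and $q'(x')=\{x'\}$ otherwise. This map is still total and computable from $q,n$, every string lies in its own value, and $q'(x)=A$.

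The values of $q'$ may still overlap, so the fibres of $q'$ are not yet cells inside the values. We handle this with a greedy disjointification. Enumerate the distinct sets in the image $\{q'(x') : x'\in\{0,1\}^n\}$ as $A^{(1)},A^{(2)},\dots$, in the order of first appearance along the lexicographic order of $x'$. Then set $A_1^{(i)}=A^{(i)}\setminus\bigcup_{j<i}A^{(j)}$. The nonempty $A_1^{(i)}$ are disjoint, and they cover $\{0,1\}^n$ because every $x'$ lies in $q'(x')$. So they form a partition $p$ with $A_1^{(i)}\subseteq A^{(i)}$. The whole construction is computable from $q$ and $n$, which gives $C(p)\le\varepsilon+O(\log n)$.

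The main obstacle is that $x$ must lie in the cell coming from $A$ itself. It may instead fall into $A_1^{(j)}$ for some earlier $j$ with $x\in A^{(j)}$, and then $A_1^{(j)}$ need not be contained in $A$. I would fix this by redefining the cells so that each string goes into the cell of its \emph{own} set. Put $A_1=\{x' : q'(x')=A\}$. Each such fibre lies in the common value, since $x'\in q'(x')=A$. The fibres of $q'$ therefore partition $\{0,1\}^n$, and each fibre is contained in its value. This fibre partition $p$ is computable from $q$ and $n$, so $C(p)\le\varepsilon+O(\log n)$. Its cell $A_1=q'^{-1}(A)$ contains $x$ and satisfies $A_1\subseteq A$. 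The greedy step is therefore unnecessary once the repair $q'$ is made.

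It remains to bound $C(A_1)$. The set $A_1$ is computed from $A$ together with $q$ and $n$, by listing all $x'$ with $q'(x')=A$. Hence
\[
C(A_1)\le C(A)+|q|+O(\log n)\le C(A)+\varepsilon+O(\log n),
\]
where the logarithmic term pays for self-delimiting encodings of $q$ and $n$. This gives every clause of Lemma~\ref{lem:partition}.
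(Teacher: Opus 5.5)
Your proposal is correct and is essentially the paper's proof. The fibre partition of your repaired map $q'$ is exactly the paper's partition (strings with $x'\in q(x')$ grouped by the value $q(x')$, every other string a singleton), and your cell $(q')^{-1}(A)$ is the paper's $A_1=\{x''\in A : q(x'')=A\}$, bounded the same way; the greedy disjointification paragraph is a detour you rightly abandon and can simply delete.
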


\begin{proof}
Let $p$ be a total program with $p(x)=A$ and $|p|\le\varepsilon$. Group
the strings $x'$ with $x'\in p(x')$ according to the value $p(x')$, and
put every other string in a singleton cell. This partition is
computable from $p$ and $n$, and the cell of $x$ is
$A_1=\{x''\in A : p(x'')=A\}$, computable from $A$, $p$ and $n$.
\end{proof}

\subsection{Restricted model classes}
\label{sec:prelim-restricted}

A family $\A$ of finite sets is \emph{acceptable}~\cite{VV10} (see
also~\cite{M16,SUV}) if it is enumerable, contains every $\{0,1\}^n$,
and satisfies the \emph{covering condition}: for a fixed polynomial
$p$, every $A\in\A$ and every $c<|A|$, the $n$-bit strings of $A$ are
covered by at most $p(n)|A|/c$ members of $\A$ of size at most $c$.

\begin{theorem}[\cite{VV10}]\label{thm:VV}
If $\A$ is acceptable, $h^{\A}_x$ obeys the halving law, and for every
admissible shape $P$ there is $x$ with $h_x$ and $h^{\A}_x$ both
$C(P)+O(\sqrt{n\log n})$-close to $P$.
\end{theorem}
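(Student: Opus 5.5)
We prove the two parts of Theorem~\ref{thm:VV} separately: the halving law for $h^{\A}_x$, and the realisation of every admissible shape by a single string whose two structure functions both stay close to it. Throughout we write $h^{\A}_x(\alpha)=\min\{\log|A| : x\in A\in\A,\ C(A)\le\alpha\}$.

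\textbf{Halving law.} Let $A\ni x$ be in $\A$ with $C(A)\le\alpha$, and let $k\ge 1$. We apply the covering condition to $A$ with $c=|A|/2^k$. This gives at most $p(n)2^k$ members of $\A$, each of size at most $|A|/2^k$, covering the $n$-bit strings of $A$. Since $\A$ is enumerable, from $A$, $k$ and $n$ we can list members of $\A$ of size at most $c$ until the $n$-bit strings of $A$ are covered. This is the step to handle with care: a greedy enumeration need not stop at $p(n)2^k$ sets. The fix is to enumerate sets and retain only those that meet $A$ in some string not already covered, and to argue via the guaranteed cover that a suitable sub-family of size $\mathrm{poly}(n)2^k$ is found. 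The standard trick is to search for a covering family of that size, rather than building one greedily, which is computable by enumeration. The set containing $x$ is then named by its index, using $k+O(\log n)$ extra bits. Hence $h^{\A}_x(\alpha+k+O(\log n))\le h^{\A}_x(\alpha)-k$, which is the halving law.

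\textbf{Realising shapes.} Fix an admissible $P$, given by a list of $O(\sqrt{n/\log n})$ corner points, or approximated by one up to the stated error. We build $x$ by the standard diagonal construction of~\cite{VV04}, but using only models from $\A$.
\begin{enumerate}[label=(\alph*)]
\item Enumerate all members of $\A$ of complexity below the relevant thresholds $\alpha_i$; there are at most $2^{\alpha_i}$ of them.
\item Maintain a nested chain of sets in $\A$. Start from $\{0,1\}^n\in\A$, and at each corner use the covering condition to pass to a subcover set of the prescribed size $2^{P(\alpha_i)}$.
\item Choose $x$ as the first string that avoids every enumerated small-complexity set of size below the threshold. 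Counting shows such strings exist, because the excluded sets have total size at most $2^{\alpha}\cdot 2^{P(\alpha)-\text{slack}}$, which is less than the current set size.
\end{enumerate}
The upper bound $h^{\A}_x\le P$ comes from the chain, since its members lie in $\A$ and are simple given $P$ and the step indices. The lower bound $h_x\ge P$ for arbitrary models comes from avoidance. Since $h_x\le h^{\A}_x$, both functions are close to $P$.

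\textbf{Main obstacle.} The delicate point is complexity bookkeeping. Each chain step costs $O(\log n)$ bits, and the number of steps needed to approximate $P$ gives the $O(\sqrt{n\log n})$ term: with $m$ corners the cost is $m\log n$ bits, and rounding between corners costs $n/m$, so we balance the two terms. The step I expect to be hardest is making the avoidance counting valid uniformly over all $\alpha$ at once. I would first try the usual device of fixing a finite grid of $\alpha$ values and summing the excluded measure over the grid.
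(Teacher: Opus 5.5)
Your halving-law argument is correct. A cover of $A\cap\{0,1\}^n$ by at most $p(n)2^k$ members of $\A$ of size at most $|A|/2^k$ exists, so an exhaustive search over finite subfamilies of the enumerated members finds one from $(A,k,n)$. Naming the member that contains $x$ then costs $k+O(\log n)$ bits. (The greedy variant you mention first would not bound the number of sets, but you discard it.)

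The gap is in the realisation part. Write $(\alpha_i,t_i)$ for the corners of $P$, and call a set bad at level $i$ if its complexity is below $\alpha_i$ and its size is below $2^{t_i-e}$.

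\emph{The counting in (c) fails.} The current chain set at level $\alpha$ has size $2^{P(\alpha)}$, while the bad sets can cover up to $2^{\alpha+P(\alpha)-e}$ strings, which is far more once $\alpha>e$. The comparison is only valid against $2^n$, and a string that avoids the bad sets somewhere in $\{0,1\}^n$ need not lie in your chain.

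\emph{Behind this is a structural problem with fixing the chain before choosing $x$.} If the chain is produced from $P$ by a computable rule, its members are simple, and $x\in A_m$ contradicts the lower bound $h_x\ge P$ you want. So $A_j$ must depend on the enumeration of bad sets, yet it may use only about $\alpha_j$ bits of it, essentially the bad sets of levels $\le j$. The bad sets of levels $i>j$ arrive later. Each is smaller than $A_j$, but since $\alpha_i+t_i\le\alpha_j+t_j$, their total size can reach $2^{\alpha_j+t_j-e}\gg 2^{t_j}$. So they can swallow any $A_j$ chosen in advance, while letting $A_j$ see all bad sets raises its complexity to about $C(x)$.

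Two smaller slips. Cover members need not lie inside the previous set, so there is no nested chain in $\A$. And step (a) must exclude all simple small sets, not only members of $\A$; otherwise you bound only $h^{\A}_x$ from below.

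The missing idea is an \emph{adaptive} construction run alongside the enumeration of bad sets. Maintain $A_1,\dots,A_m\in\A$ with $|A_j|\le 2^{t_j}$. Take each $A_j$ from an effectively found cover of $A_{j-1}$, choosing it to maximise the part of $A_1\cap\dots\cap A_j$ not yet covered by bad sets. By the covering condition this part is at least $2\theta_j$ right after the choice, where $\theta_j\approx 2^{t_j}/(2p(n))^j$. Whenever it falls below $\theta_j$, take the lowest such level $j$ and rechoose $A_j,\dots,A_m$.

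Each failure at level $j$ is charged to one of two things:
(i) a bad set of level at most $j$, of which there are fewer than $m\,2^{\alpha_j}$;
(ii) a loss of $\theta_j$ points to bad sets of higher levels, whose total size is at most $m\,2^{\alpha_j+t_j-e}$.
With $e\approx m\log(2p(n))+\log m$, each $A_j$ changes only $O(m^{2}\,2^{\alpha_j})$ times. So its final value has complexity $\alpha_j+O(\log n)+C(P)$, given by the number of changes. Then $x$ is any point of the final intersection that lies outside all bad sets.

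This is where $\sqrt{n\log n}$ genuinely comes from. Because covers are not nested, a factor $p(n)$ is lost at each level, and these losses accumulate into the slack $e=O(m\log n)$, which is balanced against the $n/m$ approximation error. Your bookkeeping paragraph reaches that balance. But the step you single out as hardest, avoidance uniformly in $\alpha$, is only a union bound over the $m$ corners. The real difficulty is the upper bound, which a statically chosen chain cannot deliver.
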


\section{The Galois connection}
\label{sec:galois}\label{sec:sympart}

\subsection{Symmetric partitions}

Fix an \emph{ambient} finite group $G$ acting on a finite set $X$, with
$G=G_n$ computable from $n$ when $X=\{0,1\}^n$. Let $L(G)$ be the
subgroup lattice and $\Par(X)$ the partition lattice, ordered by
refinement ($p\le q$ if every cell of $p$ lies in a cell of $q$). Put
\[
  \orb_G(U)=\{\text{orbits of }U\},\qquad
  \stab_G(p)=\{g\in G : g(B)=B\text{ for every cell }B\in p\}.
\]
\begin{proposition}[\cite{Ke99}]\label{prop:galois}
$\orb_G(U)\le p\iff U\le\stab_G(p)$. Hence both maps are monotone,
$U\le\stab_G\orb_G(U)$, $\orb_G\stab_G(p)\le p$, and the \emph{closed
subgroups} ($U=\stab_G\orb_G U$) correspond bijectively and
order-preservingly to the \emph{symmetric partitions}
($p=\orb_G\stab_G p$), which are exactly the orbit partitions of
subgroups of $G$. Denote their lattice by $\Par_G$.
\end{proposition}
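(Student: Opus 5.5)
The plan is to prove the adjunction $\orb_G(U)\le p\iff U\le\stab_G(p)$ directly, and then obtain everything else by the standard formal consequences of a Galois connection between two posets.

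For the forward direction, assume every $U$-orbit lies inside a cell of $p$. Take $u\in U$ and a cell $B\in p$. For each $x\in B$, the point $ux$ lies in the $U$-orbit of $x$, hence in the cell containing $x$, namely $B$. So $u(B)\subseteq B$, and since $u$ is a bijection of the finite set $X$ and permutes the cells, $u(B)=B$. Thus $u\in\stab_G(p)$. For the converse, assume $U\le\stab_G(p)$. If $x\in B\in p$, then $ux\in u(B)=B$ for every $u\in U$, so the orbit $U\cdot x$ lies in $B$. We also check that $\stab_G(p)$ is a subgroup: it is an intersection of set stabilisers, and these are closed under products and inverses.

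The formal consequences follow in order.
\begin{itemize}
\item Taking $p=\orb_G(U)$ in the adjunction gives $U\le\stab_G\orb_G(U)$.
\item Taking $U=\stab_G(p)$ gives $\orb_G\stab_G(p)\le p$.
\item Monotonicity of $\stab_G$: if $p\le q$, then $\orb_G\stab_G(p)\le p\le q$, so $\stab_G(p)\le\stab_G(q)$ by the adjunction. Monotonicity of $\orb_G$ is the dual argument, or can be seen directly since a larger group has coarser orbits.
\item From these, $\orb_G\stab_G\orb_G=\orb_G$ and $\stab_G\orb_G\stab_G=\stab_G$. One inequality comes from applying the monotone map to a unit inequality; the other is a unit inequality applied at $\orb_G(U)$ or $\stab_G(p)$.
\end{itemize}

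It follows that the images of the two maps are exactly the closed elements, and that the two maps restrict to mutually inverse, order-preserving bijections between closed subgroups and closed partitions. Finally, every orbit partition $p=\orb_G(U)$ satisfies $\orb_G\stab_G(p)=\orb_G\stab_G\orb_G(U)=\orb_G(U)=p$, so it is symmetric. Conversely, a symmetric partition is by definition the orbit partition of the subgroup $\stab_G(p)$. The closed partitions therefore form a lattice, being order-isomorphic to the closed subgroups, which are the closed elements of a closure operator on the complete lattice $L(G)$.

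I do not expect a real obstacle. The only point needing care is the forward direction, where $u(B)\subseteq B$ must be upgraded to $u(B)=B$. I would argue this by finiteness and injectivity: $u$ restricted to $B$ is an injection $B\to B$ of a finite set, hence a bijection onto $B$.
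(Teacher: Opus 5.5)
Your proposal is correct and follows the same route as the paper. You verify the adjunction $\orb_G(U)\le p\iff U\le\stab_G(p)$ directly, using that a cell is $U$-invariant exactly when it is a union of $U$-orbits, and you then derive monotonicity, the two closure inequalities, the bijection between closed subgroups and symmetric partitions, and the lattice structure from the theory of Galois connections, spelling out what the paper only cites. As a minor simplification, the step from $u(B)\subseteq B$ to $u(B)=B$ needs no finiteness: applying the same inclusion to $u^{-1}\in U$ gives $B\subseteq u(B)$.
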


\begin{proof}
If every $U$-orbit lies in a cell of $p$, each cell is a union of
$U$-orbits and is $U$-invariant; conversely, if $U$ preserves every
cell its orbits lie in cells. The rest is the general theory of
Galois connections.
\end{proof}

The closed subgroup $\stab_G(p)$ is the \emph{largest} subgroup with
orbit partition $p$: the \emph{canonical certificate} of the hypothesis
$p$.

\begin{example}\label{ex:closed}
\leavevmode
\begin{enumerate}[label=(\alph*),nosep]
\item $G=\Sym(X)$: every partition is symmetric, and the canonical
certificates are the \emph{Young subgroups} $\prod_{B\in p}\Sym(B)$.
\item $G=S_n$ permuting the coordinates of $\{0,1\}^n$: $\orb(S_n)$ is
the partition into weight classes; subgroups give finer
``combinatorial'' partitions.
\item $G=\GL(n,2)$: symmetric partitions are rigid. If
$W\ne\F_2^n$ is a subspace with $\dim W\ge2$, the partition
$\{W\setminus\{0\}\}\cup\{\text{singletons}\}$ is not symmetric: any
$g$ preserving it fixes every vector outside $W\setminus\{0\}$, these
span $\F_2^n$, so $g=1$. With a canonical complement $D$, the orbit
partition of $\GL(W)\times\{\mathrm{id}_D\}$ is symmetric; its cells
are $\{d\}$ and $d+(W\setminus\{0\})$, $d\in D$. A cell can appear
only together with the other orbits of its certificate.
\end{enumerate}
\end{example}

\subsection{Canonical cost}

$C(p)$ is the complexity of a partition $p$ of $\{0,1\}^n$ as a finite
object, $C(p,B)$ that of $p$ with one of its cells, and $C(U)$ the
complexity of a generator list of $U$.

\begin{proposition}[Canonical cost]\label{prop:canon}
For $U\le G$ with $p=\orb_G(U)$ and a cell $B$ of $p$:
\[
  C(\stab_G p)\le C(p)+O(1)\le C(U)+O(1),\qquad
  C(\stab_G p,B)=C(p,B)\pm O(1).
\]
The canonical certificate is, up to $O(1)$, the cheapest group with the
given orbits, and it costs exactly as much as the hypothesis.
\end{proposition}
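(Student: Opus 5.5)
The whole argument rests on the fact that, because $G=G_n$ is computable from $n$, the maps $\stab_G$, $\orb_G$ and the passage from a generator list of $U$ to its orbit partition are all total computable maps between finite objects. I will read ``$O(1)$'' as ``up to the $O(\log n)$ needed to specify $n$''; alternatively, $n$ can be recovered from $p$ itself, since $p$ is a partition of $\{0,1\}^n$. Each inequality then comes from exhibiting a fixed algorithm that turns one description into another, and the equality is two such inequalities.

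First, $C(p)\le C(U)+O(1)$. Given a generator list of $U$, I read off $n$ from the length of the strings being permuted, or from the encoding of the generators as elements of $G_n$. I then compute the orbits of $U$ on $\{0,1\}^n$ by closing each point under the generators, a breadth-first search over a finite set. This yields $p=\orb_G(U)$, and the algorithm is fixed and independent of $U$.

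Second, $C(\stab_G p)\le C(p)+O(1)$. Given $p$, I recover $n$ and enumerate $G_n$, which is possible since $G_n$ is computable from $n$. I keep exactly those $g$ with $g(B)=B$ for every cell $B\in p$. This computes $\stab_G(p)$ as a set, and a generator list is obtained canonically, for instance by taking all its elements or by greedily adding the lexicographically first element not yet generated. Combining the first two steps gives the chain in the statement.

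Third, $C(\stab_G p,B)=C(p,B)\pm O(1)$. The direction $\le$ follows by applying the second step to the $p$-component and copying $B$. For $\ge$, I use that $p$ is symmetric, so $p=\orb_G\stab_G p$ by Proposition~\ref{prop:galois}. The first step, applied to the generator list of $\stab_G p$, therefore recovers $p$, and $B$ is again copied. The same reasoning shows $C(p)\le C(\stab_G p)+O(1)$, so the first inequality is in fact an equality. That is the claim that the certificate ``costs exactly as much as the hypothesis'', and it makes $\stab_G p$ the cheapest group with orbits $p$ up to $O(1)$. The main subtlety, rather than a real obstacle, is that the certificate cost must not depend on which generating set is chosen. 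This is handled by fixing the canonical generator list in the second step. I also have to make sure that $p$ is given as an explicit finite object, the full list of cells, so that the $O(1)$ overheads are genuinely independent of $p$ and $U$. Only $n$ enters, and it is recoverable from the objects themselves.
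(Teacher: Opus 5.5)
Your proposal is correct and follows the paper's proof: orbits are computable from generators, $\stab_G(p)$ is computable from $p$ by exhaustive search in $G_n$, and $n$ is recoverable from $p$. So the overhead really is $O(1)$, and your initial $O(\log n)$ hedge is unnecessary. You also spell out the reverse direction $C(p,B)\le C(\stab_G p,B)+O(1)$ via $p=\orb_G\stab_G p$, which the paper leaves implicit.
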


\begin{proof}
Orbits are computable from generators; $\stab_G(p)$ is computable from
$p$ by search in $G_n$, and $n$ is recoverable from $p$.
\end{proof}

\subsection{The algebra of hypotheses}
\label{sec:algebra}

Since $\orb_G$ is a lower adjoint it preserves joins, and $\stab_G$ as an
upper adjoint preserves meets:
\[
  \orb_G(\langle U,V\rangle)=\orb_G(U)\vee\orb_G(V),\qquad
  \stab_G(p\wedge q)=\stab_G(p)\cap\stab_G(q).
\]
So $\Par_G$ is closed under joins in $\Par(X)$: the \emph{join} of two
hypotheses is the hypothesis of all their symmetries together, and is
coarser. The meet in $\Par_G$ is
$p\wedge_G q=\orb_G(\stab_G p\cap\stab_G q)\le p\wedge q$, the
hypothesis of their \emph{common} symmetries, which is finer. Both are
computable, so $C(p\vee q),C(p\wedge_G q)\le C(p)+C(q)+O(\log n)$. The
cell of $x$ in $p\wedge_G q$ lies inside the intersection of its
cells in $p$ and $q$. Meeting with a second cheap hypothesis is thus
the natural way to sharpen an explanation (compare the halving
discussion in Section~\ref{sec:band}).

\subsection{Relabelling}

$G$ acts by conjugation on $L(G)$ and by translation on $\Par_G$, and
the two maps are equivariant:
$\orb_G(gUg^{-1})=g\cdot\orb_G(U)$ and
$\stab_G(g\cdot p)=g\,\stab_G(p)\,g^{-1}$. The model $(g\cdot p,gB)$
for $gx$ is the model $(p,B)$ for $x$ in new coordinates, and costs at
most $C(g)+O(1)$ more.

\section{Partition structure functions}
\label{sec:partsf}

\begin{definition}
Let $\cP$ assign to each $n$ a family of partitions of $\{0,1\}^n$. A
\emph{$\cP$-model} for $x$ is a pair $(p,B)$ with $p\in\cP$ and
$x\in B\in p$. Define
\begin{align*}
h^{\cP}_x(\alpha)&=\min\{\log|B| : (p,B)\ \cP\text{-model},\ C(p,B)\le\alpha\},\\
h^{\cP}_{x,\varepsilon}(\alpha)&=\min\{\log|B| : (p,B)\ \cP\text{-model},\
C(p)\le\varepsilon,\ C(p,B)\le\alpha\},\\
\soph^{\cP}(x)&=\min\{\alpha : h^{\cP}_x(\alpha)+\alpha\le C(x)+c\log n\}.
\end{align*}
$C(p)$ is the cost of the \emph{hypothesis} and $C(B\mid p)$ that of
the \emph{cell}; $C(p,B)=C(p)+C(B\mid p)$ up to $O(\log n)$. For an
ambient group $G$ we write $h^G_x$, $h^G_{x,\varepsilon}$ and
$\soph^G$ for $\cP=\Par_G$.
\end{definition}

\begin{proposition}\label{prop:basicP}
Up to $O(1)$ shifts:
(i) $h_x\le h^{\cP}_x\le h^{\cP}_{x,\varepsilon}$, and $\cP\subseteq\cP'$
implies $h^{\cP'}_x\le h^{\cP}_x$;
(ii) every cell of a partition of complexity at most $\varepsilon$ is
$(\varepsilon+O(1))$-strong, so
$\hstr_{x,\varepsilon+c}\le h^{\cP}_{x,\varepsilon}$;
(iii) $\soph(x)\le\soph^{\cP}(x)$ up to $O(\log n)$.
\end{proposition}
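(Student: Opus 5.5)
Each of the three parts unwinds the definitions of $h^{\cP}_x$, $h^{\cP}_{x,\varepsilon}$ and $\soph^{\cP}$; the only substance is comparing complexities of pairs $(p,B)$ with complexities of sets $B$.

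For (i), take a $\cP$-model $(p,B)$ with $C(p,B)\le\alpha$. The cell $B$ is computable from the pair $(p,B)$ by projection, so $C(B)\le C(p,B)+O(1)\le\alpha+O(1)$. Since $x\in B$, $B$ is an admissible set in the definition of $h_x$ at level $\alpha+O(1)$. Hence $h_x(\alpha+O(1))\le h^{\cP}_x(\alpha)$, which is the claimed inequality up to an $O(1)$ shift. The second inequality $h^{\cP}_x\le h^{\cP}_{x,\varepsilon}$ holds exactly, because $h^{\cP}_{x,\varepsilon}$ minimises over a subset of the $\cP$-models (those with the extra constraint $C(p)\le\varepsilon$). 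Monotonicity in $\cP$ is the same observation: every $\cP$-model is a $\cP'$-model, so the minimum over the larger family is no larger.

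For (ii), let $p$ be a partition with $C(p)\le\varepsilon$ and $B\in p$. The total algorithm is: given $x'\in\{0,1\}^n$, output the cell of $p$ containing $x'$. This program consists of a shortest description of $p$ plus a constant-size wrapper. It halts on every input because $p$ is a finite partition covering $\{0,1\}^n$, and $n$ is recoverable from $p$. So $\CT(B\mid x)\le\varepsilon+O(1)$ for every $x\in B$. Now take the model $(p,B)$ attaining $h^{\cP}_{x,\varepsilon}(\alpha)$. Then $B$ is $(\varepsilon+c)$-strong, and as in (i) $C(B)\le\alpha+O(1)$. It is therefore admissible for $\hstr_{x,\varepsilon+c}(\alpha+O(1))$, which gives the claim with an $O(1)$ shift in $\alpha$. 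One caveat needs checking: the total program must receive $n$, which is implicit in the input length; I do not expect this to cost more than $O(1)$ under the paper's conventions, where the strings have known length $n$.

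For (iii), let $\alpha=\soph^{\cP}(x)$. Then $h^{\cP}_x(\alpha)+\alpha\le C(x)+c\log n$. By (i), $h_x(\alpha+O(1))\le h^{\cP}_x(\alpha)$, so
\[
h_x(\alpha+O(1))+\alpha+O(1)\le C(x)+c\log n+O(1).
\]
Absorbing the $O(1)$ into the $c\log n$ slack (taking $c$ large), we get $\soph(x)\le\alpha+O(1)$. The only delicate point in the whole argument is bookkeeping: the $O(1)$ shifts in $\alpha$ must be absorbed by the additive $c\log n$ tolerance in the definition of sophistication. Where the paper's ``up to $O(\log n)$'' convention is needed, we may also use the halving law of Lemma~\ref{lem:basic}(iii) to move shifts between the argument and the value of $h_x$.
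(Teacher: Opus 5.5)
Your proof is correct and follows the paper's argument: (i) rests on $C(B)\le C(p,B)+O(1)$, (ii) on the total program that knows $p$ and maps any string to its cell, and (iii) is read off from (i). One remark is loose: the halving law does not absorb the $O(1)$ excess in (iii), since it only says $h_x(\alpha)+\alpha$ does not increase. What absorbs it is the paper's ``up to $O(\log n)$'' convention, which lets the threshold constant $c$ shift and which you already invoke, so the halving law is not needed.
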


\begin{proof}
(i) A cell is a finite set containing $x$ and $C(B)\le C(p,B)+O(1)$.
(ii) A total program knowing $p$ maps $x$ to its cell.
\end{proof}

\paragraph{Groups and partitions.} The group formulation defines
$h^{L(G)}_x(\alpha)$ as the least $\log|U\cdot x|$ over subgroups
$U\le G$ with $C(U,U\cdot x)\le\alpha$, and $h^{L(G)}_{x,\varepsilon}$
by adding $C(U)\le\varepsilon$.

\begin{theorem}[Dictionary]\label{thm:dict}
For every ambient $G$, up to $O(1)$ shifts in $\alpha$ and $\varepsilon$,
$h^{L(G)}_x=h^G_x$ and $h^{L(G)}_{x,\varepsilon}=h^G_{x,\varepsilon}$.
\end{theorem}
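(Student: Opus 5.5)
We prove the two inequalities $h^{G}_x\le h^{L(G)}_x$ and $h^{L(G)}_x\le h^G_x$ (each up to an $O(1)$ shift in $\alpha$), and then the same two for the $\varepsilon$-versions. Throughout we use the Galois connection of Proposition~\ref{prop:galois} and the canonical cost estimate of Proposition~\ref{prop:canon}. The models on the two sides correspond in the obvious way: a subgroup $U$ with orbit $U\cdot x$ matches the pair $(\orb_G U,\,U\cdot x)$, and a symmetric partition $p$ with cell $B\ni x$ matches the pair $(\stab_G p,\,B)$. The sets whose logarithms are minimised coincide under this matching. Indeed $U\cdot x$ is exactly the cell of $x$ in $\orb_G(U)$, and for $p\in\Par_G$ we have $\orb_G\stab_G p=p$, so the $\stab_G(p)$-orbit of $x$ is $B$. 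The whole proof therefore reduces to comparing complexities.

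\emph{From groups to partitions.} Let $U\le G$ satisfy $C(U,U\cdot x)\le\alpha$. Put $p=\orb_G(U)$ and $B=U\cdot x$. Then $p\in\Par_G$, because it is the orbit partition of a subgroup. Orbits are computable from a generator list and $n$ is recoverable from $U$, so $C(p,B)\le C(U,B)+O(1)\le\alpha+O(1)$. Similarly $C(p)\le C(U)+O(1)$, which handles the $\varepsilon$-constraint. Hence $h^G_x(\alpha+O(1))\le h^{L(G)}_x(\alpha)$, and the same holds with $\varepsilon$ shifted by $O(1)$.

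\emph{From partitions to groups.} Let $(p,B)$ be a $\Par_G$-model with $C(p,B)\le\alpha$, and take $U=\stab_G(p)$, the canonical certificate. By Proposition~\ref{prop:canon}, $C(U)\le C(p)+O(1)$ and $C(U,B)=C(p,B)\pm O(1)$. We also have $U\cdot x=B$ by closure. So $(U,U\cdot x)$ is an $L(G)$-model of the same size, with cost at most $\alpha+O(1)$ and hypothesis cost at most $\varepsilon+O(1)$. This gives the reverse inequality.

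The only delicate step is justifying Proposition~\ref{prop:canon} in the form used here. Computing $\stab_G(p)$ from $p$ requires enumerating $G_n$, which is possible because $G_n$ is computable from $n$ and $n$ is recoverable from $p$. It also requires outputting some canonical generator list, for example all elements of the group. Every constant involved comes from a fixed machine, so all shifts are $O(1)$ and independent of $n$, as the statement demands. A second point to check is the definition of $C(U,U\cdot x)$: it must refer to a fixed encoding of generator lists, so that ``orbits from generators'' is a single fixed algorithm. Once both points are checked, the two inequalities combine into the stated equalities up to $O(1)$ shifts.
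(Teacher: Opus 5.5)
Your proof is correct and follows the paper's argument. It uses the same two-way matching $(U,U\cdot x)\mapsto(\orb_G U,U\cdot x)$ and $(p,B)\mapsto(\stab_G p,B)$, gets $\stab_G(p)\cdot x=B$ from $p$ being symmetric, and compares costs via Proposition~\ref{prop:canon}; your extra remarks on recovering $n$ and fixing the encoding of generator lists only make explicit what the paper leaves implicit.
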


\begin{proof}
A pair $(U,U\cdot x)$ gives the $\Par_G$-model $(\orb_G U,U\cdot x)$ at no
greater cost. A $\Par_G$-model $(p,B)$ gives $(\stab_G p,B)$, and
$\stab_G(p)\cdot x=B$ because $p$ is symmetric. Costs are compared by
Proposition~\ref{prop:canon}.
\end{proof}

\begin{proposition}[Burnside form]\label{prop:burnside}
If $(p,B)$ is a $\Par_G$-model for $x$ with certificate $U=\stab_G(p)$,
then $|B|=|U|/|\stab_U(x)|=1/\Pr_{u\in U}[ux=x]$, so
$C(x\mid p,B)\le\log|B|+O(1)$. All but a $2^{-t}$ fraction of
$x'\in B$ have $C(x'\mid p,B)\ge\log|B|-t$.
\end{proposition}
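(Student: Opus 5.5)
The proof has three parts: the size formula, the upper bound on $C(x\mid p,B)$, and the counting bound for the lower tail.

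\textbf{Size formula.} Since $p$ is symmetric, Proposition~\ref{prop:galois} gives $p=\orb_G\stab_G(p)=\orb_G(U)$. Hence every cell of $p$ is a $U$-orbit, and because $x\in B$ we have $B=U\cdot x$; this is the fact already used in the proof of Theorem~\ref{thm:dict}. The orbit--stabiliser theorem then gives $|B|=|U|/|\stab_U(x)|$. Writing $|\stab_U(x)|/|U|$ as the probability that a uniformly random $u\in U$ fixes $x$ yields $|B|=1/\Pr_{u\in U}[ux=x]$.

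\textbf{Upper bound.} Given $(p,B)$, one can list the elements of $B$ in a fixed order, say lexicographic, since $B$ is a finite explicit set. Then $x$ is described by its index in that list, which takes $\lceil\log|B|\rceil$ bits. The length of the index is fixed by $|B|$, which is computable from $B$, so no self-delimiting overhead is needed. This gives $C(x\mid p,B)\le\log|B|+O(1)$.

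\textbf{Lower tail.} Count programs: there are fewer than $2^{\log|B|-t}$ programs of length below $\log|B|-t$. With the condition $(p,B)$ fixed, each such program outputs at most one string. So at most $2^{-t}|B|$ elements $x'\in B$ can have $C(x'\mid p,B)<\log|B|-t$. All remaining $x'\in B$ satisfy the claimed bound $C(x'\mid p,B)\ge\log|B|-t$.

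No step here is a real obstacle. The only point needing care is justifying $B=U\cdot x$, which requires that the cell is exactly an orbit of the canonical certificate and not merely a union of orbits. This is precisely where symmetry of $p$ is used, through $\orb_G\stab_G p=p$.
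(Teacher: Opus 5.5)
Your proof is correct and follows the reasoning the paper relies on; the paper gives no separate proof. Symmetry of $p$ makes the cell an orbit of the canonical certificate, $\stab_G(p)\cdot x=B$, as in the proof of Theorem~\ref{thm:dict}, and the rest is orbit--stabiliser, indexing within $B$, and the usual count of short programs.

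There is one minor slip in the count. When $\log|B|-t$ is not an integer, which is typical here (e.g.\ $|B|=2^n-1$), there are $2^{\lceil\log|B|-t\rceil}-1$ strings shorter than $\log|B|-t$, and this can exceed $2^{\log|B|-t}$. So your argument actually gives a fraction below $2^{1-t}$, an $O(1)$ slack that the statement tacitly allows.
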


\begin{center}
\begin{tabular}{ll}
\toprule
symmetry & algorithmic statistics\\
\midrule
symmetric partition $p\in\Par_G$ & hypothesis (model class)\\
cell $B\ni x$ & model\\
$\log|U|-\log|\stab_U(x)|$ & $\log|B|$\\
$C(p)$ small & strong model\\
$x$ typical in its cell & small randomness deficiency\\
$\soph^G(x)$ & $\soph(x)$\\
\bottomrule
\end{tabular}
\end{center}

\section{All permutations}
\label{sec:sym}

Let $\Sym$ denote the ambient group $\Sym(\{0,1\}^n)$. By
Example~\ref{ex:closed}(a), $\Par_{\Sym}$ is the lattice of \emph{all}
partitions.

\begin{theorem}\label{thm:sym}
Up to $O(1)$ shifts, $h^{\Sym}_x=h_x$, so $\soph^{\Sym}=\soph$. Up to
$O(\varepsilon+\log n)$ shifts, $h^{\Sym}_{x,\varepsilon}=\hstr_{x,\varepsilon}$.
\end{theorem}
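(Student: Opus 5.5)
The argument compares the two partition structure functions for $\cP=\Par_{\Sym}$ with their unrestricted counterparts, with one inequality for each direction of each claim. By Example~\ref{ex:closed}(a), $\Par_{\Sym}$ consists of all partitions of $\{0,1\}^n$. Proposition~\ref{prop:basicP}(i) already gives $h_x\le h^{\Sym}_x$, and Proposition~\ref{prop:basicP}(ii) gives $\hstr_{x,\varepsilon+c}\le h^{\Sym}_{x,\varepsilon}$. So only the reverse inequalities remain.

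For $h^{\Sym}_x\le h_x$, take an optimal model $S\ni x$ with $C(S)\le\alpha$ and form the two-cell partition $p_S=\{S,\{0,1\}^n\setminus S\}$ (or the single cell $\{0,1\}^n$ if $S$ is everything). This partition is symmetric, since its canonical certificate is $\Sym(S)\times\Sym(\{0,1\}^n\setminus S)$. The pair $(p_S,S)$ is computable from $S$, and $n$ is recoverable from $S$, so $C(p_S,S)\le C(S)+O(1)$. Hence $h^{\Sym}_x(\alpha+O(1))\le\log|S|=h_x(\alpha)$. The sophistication equality then follows directly from the definitions, because the sufficiency condition $h+\alpha\le C(x)+c\log n$ absorbs the $O(1)$ shift into the $c\log n$ slack. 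Note that this construction does not give a cheap hypothesis: $C(p_S)$ can be as large as $C(S)$. That is why this argument says nothing about $h^{\Sym}_{x,\varepsilon}$.

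For $h^{\Sym}_{x,\varepsilon}\le\hstr_{x,\varepsilon}$ up to $O(\varepsilon+\log n)$, take an $\varepsilon$-strong model $A\ni x$ with $C(A)\le\alpha$ and $\log|A|=\hstr_{x,\varepsilon}(\alpha)$. Apply the Partition lemma (Lemma~\ref{lem:partition}). It yields a partition $p$ with $C(p)\le\varepsilon+O(\log n)$ and a cell $A_1\ni x$ with $A_1\subseteq A$, so $\log|A_1|\le\log|A|$, and $C(A_1)\le\alpha+\varepsilon+O(\log n)$. It remains to bound $C(p,A_1)$. From the proof of the lemma, $A_1$ is computable from $(A,p,n)$. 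Therefore $C(p,A_1)\le C(p)+C(A)+O(\log n)\le\alpha+\varepsilon+O(\log n)$. The resulting $(p,A_1)$ is a $\Par_{\Sym}$-model meeting the hypothesis bound $\varepsilon+O(\log n)$ and the complexity bound $\alpha+\varepsilon+O(\log n)$. This gives $h^{\Sym}_{x,\varepsilon+O(\log n)}(\alpha+\varepsilon+O(\log n))\le\hstr_{x,\varepsilon}(\alpha)$.

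The main point needing care is this last bookkeeping step. One has to make sure the cost charged is that of the pair $(p,A_1)$ and not just of $A_1$, that the shifts land in both arguments $\varepsilon$ and $\alpha$, and that they are of size $O(\varepsilon+\log n)$. The remaining direction is the easy one: Proposition~\ref{prop:basicP}(ii) only needs that a total program which knows $p$ outputs the cell of $x$, which costs $C(p)+O(1)\le\varepsilon+O(1)$.
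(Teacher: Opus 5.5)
Your proof is correct and follows the paper's argument. For the reverse inequality on $h^{\Sym}_x$ you use a partition computable from $S$ that has $S$ as a cell ($\{S,\{0,1\}^n\setminus S\}$ where the paper uses $S$ plus singletons, an immaterial difference), and for the strong claim you combine Proposition~\ref{prop:basicP}(ii) with the partition lemma, with your explicit bound on $C(p,A_1)$ filling in bookkeeping the paper leaves implicit.
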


\begin{proof}
For $S\ni x$, the partition $\{S\}\cup\{\text{singletons}\}$ costs
$C(S)+O(1)$. For the second claim combine
Proposition~\ref{prop:basicP}(ii) with the partition lemma
(Lemma~\ref{lem:partition}).
\end{proof}

\begin{corollary}
Up to $O(\varepsilon+\delta+\log n)$, a string is
$(\varepsilon,\delta)$-normal if and only if cells of partitions of
complexity at most $\varepsilon$ (equivalently, orbits of groups of
complexity at most $\varepsilon$) are as good as arbitrary models at
every level. It is strange if and only if no such cell is a minimal
sufficient statistic. So there are strings with no sufficient cheap
symmetry explanation~\cite{V15}, and every admissible shape is realised
by a string whose cheap-symmetry structure function equals its
Kolmogorov structure function~\cite{M16}.
\end{corollary}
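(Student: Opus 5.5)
The corollary is a translation exercise: I would unwind the definitions of normality and strangeness and substitute the identities of Theorem~\ref{thm:sym}. First I would record the three dictionaries available for $G=\Sym$. Theorem~\ref{thm:sym} gives $h^{\Sym}_x=h_x$ up to $O(1)$ shifts and $h^{\Sym}_{x,\varepsilon}=\hstr_{x,\varepsilon}$ up to $O(\varepsilon+\log n)$ shifts. Example~\ref{ex:closed}(a) says $\Par_{\Sym}$ is the lattice of all partitions. The Dictionary (Theorem~\ref{thm:dict}) together with Proposition~\ref{prop:canon} lets me replace ``cells of partitions of complexity at most $\varepsilon$'' by ``orbits $U\cdot x$ of groups $U$ with $C(U)\le\varepsilon$'' at $O(1)$ cost. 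This justifies the parenthetical ``equivalently'' in the statement.

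For normality, the definition says $x$ is $(\varepsilon,\delta)$-normal iff $\hstr_{x,\varepsilon}(\alpha+\delta)\le h_x(\alpha)+\delta$ for all $\alpha$. I would substitute $\hstr_{x,\varepsilon}$ by $h^{\Sym}_{x,\varepsilon}$ and $h_x$ by $h^{\Sym}_x$. The resulting inequality reads: cells of cheap partitions do as well as arbitrary models at every level, with the shifts absorbed into $\delta$. The shifts cost $O(\varepsilon+\log n)$ in $\alpha$ and in the value. Both directions follow because the comparisons in Theorem~\ref{thm:sym} are two-sided. The only point needing care is bookkeeping. The $\varepsilon$ in $h^{\Sym}_{x,\varepsilon}$ corresponds to $\varepsilon+O(\varepsilon+\log n)$ in $\hstr$, since Lemma~\ref{lem:partition} loses $\varepsilon$ in model complexity and Proposition~\ref{prop:basicP}(ii) loses a constant in $\varepsilon$. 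This is exactly why the conclusion is stated only up to $O(\varepsilon+\delta+\log n)$.

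For strangeness, a strong minimal sufficient statistic is an $\varepsilon$-strong $A\ni x$ with $C(A)\approx\soph(x)$ and $C(A)+\log|A|\le C(x)+O(\log n)$. One direction uses Lemma~\ref{lem:partition}: such an $A$ yields a cell $A_1\subseteq A$ of a partition of complexity $\le\varepsilon+O(\log n)$, with $C(A_1)\le C(A)+\varepsilon+O(\log n)$ and $\log|A_1|\le\log|A|$. So $A_1$ is again a minimal sufficient statistic up to $O(\varepsilon+\log n)$, and it is a cell of a cheap partition. The other direction is Proposition~\ref{prop:basicP}(ii): a cell of a cheap partition is strong, so if it is a minimal sufficient statistic then $x$ is not strange. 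Hence $x$ is strange iff no cheap-partition cell is a minimal sufficient statistic, up to the stated precision.

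The closing sentence then imports existence results. From \cite{V15}, strange strings exist, and by the previous paragraph they have no sufficient cheap symmetry explanation. From \cite{M16}, for every admissible shape there is a normal string realising it, and the normality equivalence gives $h^{\Sym}_{x,\varepsilon}\approx h_x$ for it. I expect the main obstacle to be only the quantifier bookkeeping. One must make sure the error terms from Lemma~\ref{lem:partition}, which lose $\varepsilon$ in model complexity, stay within $O(\varepsilon+\delta+\log n)$ uniformly in $\alpha$. One must also check that ``minimal'' sufficient statistic is robust under such shifts; I would handle this by reading minimality as $C(A)\le\soph(x)+O(\varepsilon+\log n)$.
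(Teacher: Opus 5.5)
Your proposal is correct and follows the paper's route. The corollary is read off from Theorem~\ref{thm:sym}, whose second half is exactly the partition lemma (Lemma~\ref{lem:partition}) combined with Proposition~\ref{prop:basicP}(ii); the Dictionary (Theorem~\ref{thm:dict}) supplies the orbit formulation, and the cited existence results of Vereshchagin and Milovanov supply the last sentence. One small bookkeeping refinement: the strength parameter only moves from $\varepsilon$ to $\varepsilon+O(\log n)$, while the $\varepsilon+O(\log n)$ loss from Lemma~\ref{lem:partition} falls on the level $\alpha$, and both stay within the stated $O(\varepsilon+\delta+\log n)$.
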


Theorem~\ref{thm:sym} says that unrestricted symmetry explains
everything. As a statement about symmetry this is vacuous: in the
coordinates of Section~\ref{sec:coords}, the certificates realising
arbitrary sets are structurally trivial symmetries whose information lies
entirely in their placement (Proposition~\ref{prop:placement}).

\section{Linear symmetry}
\label{sec:GL}

Identify $\{0,1\}^n$ with $\F_2^n$ and take $G=\GL(n,2)$. Linear maps
fix $0$, so throughout $x\ne0$.

\subsection{Cells: linearly homogeneous sets}

Which sets can be cells of linear-symmetric partitions? A linear map
preserves all $\F_2$-linear dependencies, so a set can be an orbit of a
linear group only if its internal dependency structure looks the same
from each of its points. We make this precise.

A \emph{relation} in $S\subseteq\F_2^n\setminus\{0\}$ is a subset
$T\subseteq S$ with $\sum T=0$, an XOR dependency among elements of $S$.
For example, in $S=\{e_1,e_2,e_1+e_2\}$ the whole set is a relation,
since $e_1+e_2+(e_1+e_2)=0$. The family of all relations is the
\emph{dependency structure} of $S$ (the binary matroid it represents). A
permutation $\pi$ of $S$ \emph{preserves relations} if
$\sum T=0\iff\sum\pi(T)=0$ for all $T\subseteq S$, i.e.\ if it is an
automorphism of the dependency structure.

\begin{definition}
A set $S\subseteq\F_2^n\setminus\{0\}$ is \emph{linearly homogeneous} if
the relation-preserving permutations of $S$ act transitively on $S$:
for any $s,s'\in S$ some automorphism of the dependency structure maps
$s$ to $s'$.
\end{definition}

The analogy is with vertex-transitive graphs, where every vertex looks
alike; here every point looks alike with respect to XOR dependencies
instead of edges. Two extreme cases show the range of the notion. A
linearly independent set has no relations, so every permutation
preserves them and the set is homogeneous. A punctured subspace
$W\setminus\{0\}$ has as many relations as possible, but $\GL(W)$ moves
any nonzero vector to any other while preserving them, so it too is
homogeneous. Homogeneity fails when some points play special roles.

\begin{example}\label{ex:homog}
Linearly homogeneous sets include: $W\setminus\{0\}$ for a subspace
$W$; affine subspaces $a+W$ with $a\notin W$; linearly independent
sets; weight classes and other orbits of coordinate permutation groups;
and cosets of multiplicative subgroups of $\F_{2^n}^\times$, identifying
$\F_2^n$ with $\F_{2^n}$. The set $\{e_1,e_2,e_1+e_2,e_3\}$ is not
homogeneous: $e_1$ lies in the three-term relation
$\{e_1,e_2,e_1+e_2\}$ while $e_3$ lies in no relation, so no
relation-preserving permutation maps $e_1$ to $e_3$.
\end{example}

The key fact is a small piece of linear algebra: \emph{a permutation of
$S$ extends to an invertible linear map if and only if it preserves
relations}. This gives the characterisation of cells.

\begin{theorem}\label{thm:homog}
$S\subseteq\F_2^n\setminus\{0\}$ is a cell of some partition in
$\Par_{\GL}$ if and only if it is linearly homogeneous. Moreover such a
cell is a cell of a symmetric partition of complexity $C(S)+O(1)$, so
$\hGL_x$ is Kolmogorov's structure function restricted to linearly
homogeneous sets.
\end{theorem}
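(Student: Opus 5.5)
The plan is to prove the key linear-algebra fact first, then read off both directions of the characterisation from it, and finally handle the cost statement with a canonical construction.

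\emph{Extension lemma.} A permutation $\pi$ of $S$ extends to some $g\in\GL(n,2)$ if and only if it preserves relations. Necessity is immediate, since $g$ is linear and $g(\sum T)=\sum g(T)$. For sufficiency:
\begin{enumerate}[nosep]
\item Choose a basis $s_1,\dots,s_k$ of $\mathrm{span}(S)$ inside $S$.
\item Define $g_0$ on $\mathrm{span}(S)$ by $g_0(s_i)=\pi(s_i)$.
\item Every $s\in S$ is $\sum_{i\in I}s_i$ for a unique $I$, so $\{s\}\cup\{s_i:i\in I\}$ (taken as a set, cancelling repeats) is a relation. Its image under $\pi$ is then a relation, which gives $g_0(s)=\pi(s)$.
\item Independence is the absence of nontrivial relations, so preservation of relations in both directions makes $\pi(s_1),\dots,\pi(s_k)$ independent. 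Hence $g_0$ is an isomorphism onto $\mathrm{span}(\pi S)=\mathrm{span}(S)$.
\item Extend $g_0$ to $\F_2^n$ by the identity on a complement $D$ of $\mathrm{span}(S)$.
\end{enumerate}

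\emph{Cells are homogeneous.} Let $S$ be a cell of $p=\orb_{\GL}(U)$. Then $S$ is a $U$-orbit, so $S\subseteq\F_2^n\setminus\{0\}$ (we have $x\ne0$ and $0$ is fixed) and $U$ acts transitively on $S$. Each $u\in U$ restricts to a relation-preserving permutation of $S$. So $S$ is linearly homogeneous.

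\emph{Homogeneous sets are cells.} Let $A$ be the automorphism group of the dependency structure of $S$, and let $W=\mathrm{span}(S)$. Fix a canonical complement $D$, say the lexicographically first one given $W$. Each $\pi\in A$ extends to a linear $g_\pi$ on $W$, as in the extension lemma. This extension is unique because $S$ spans $W$, and we extend it by $\mathrm{id}_D$. The set $U=\{g_\pi\}$ is then a subgroup of $\GL(n,2)$ isomorphic to $A$. It preserves $S$ and acts on it as $A$, hence transitively, so $S$ is a $U$-orbit and a cell of $\orb_{\GL}(U)$.

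\emph{Cost.} The map $S\mapsto U$ is computable from $S$ and $n$: find $A$ by brute force over permutations, then find $W$ and $D$ canonically. Hence $C(\orb U,S)\le C(S)+O(1)$, with $n$ recoverable from $S$. The point to check is that no extra information beyond $S$ enters, and the canonical choice of $D$ is what guarantees this.

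\emph{Restricted structure function.} Every $\Par_{\GL}$-model $(p,B)$ satisfies $C(B)\le C(p,B)+O(1)$, and $B$ is homogeneous. Conversely, every homogeneous $S\ni x$ yields a model of cost $C(S)+O(1)$. So $\hGL_x$ coincides, up to $O(1)$ shifts, with $\min\{\log|S|: x\in S \text{ homogeneous},\ C(S)\le\alpha\}$.

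The main obstacle is the extension lemma, specifically the step showing that $\pi$ maps the chosen basis to an independent set and agrees with $g_0$ on all of $S$. This needs preservation of relations in both directions, not just one.
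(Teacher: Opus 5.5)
Your proposal is correct and follows the paper's proof. Both have the same structure: restricting $U$ to an orbit gives one direction; extending each relation-preserving $\pi$ linearly to $\mathrm{span}(S)$ and then by the identity on a canonical complement gives the other; and computability from $S$ gives the cost bound. The only difference is that you build the extension from a basis chosen inside $S$, whereas the paper defines $L_\pi(\sum T)=\sum\pi(T)$ directly and checks well-definedness through $T\triangle T'$.

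One small remark on your last paragraph: for a bijection of the finite set $S$, forward preservation of relations already implies the backward direction. The induced map on subsets sends the finite family of relations injectively into itself, hence onto it. So your ``both directions'' concern is automatic.
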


\begin{proof}
If $U\cdot x=S$, every $u\in U$ restricts to a relation-preserving
bijection of $S$, and these act transitively. Conversely, a
relation-preserving $\pi$ defines a linear bijection
$L_\pi(\sum_{s\in T}s)=\sum_{s\in T}\pi(s)$ of $\mathrm{span}(S)$. It is
well defined, because $\sum T=\sum T'$ implies $\sum(T\triangle T')=0$
and hence $\sum\pi(T)=\sum\pi(T')$, and it is injective, because
$\pi^{-1}$ also preserves relations. Extend it by the identity on a
canonical complement. The resulting group is computable from $S$, its
orbit through $x$ is $S$, and its orbit partition costs $C(S)+O(1)$.
\end{proof}

\paragraph{Why this matters: linear symmetry as a restricted model
class.} Theorem~\ref{thm:homog} places the linear theory inside the
general theory of restricted model classes of
Section~\ref{sec:prelim-restricted}.
\begin{enumerate}[label=(\arabic*)]
\item \emph{The groups disappear.} $\hGL_x$ is defined by ranging over
subgroups of $\GL(n,2)$ and their orbits. The theorem identifies it with
$h^{\A}_x$ for the concrete, combinatorially defined family
$\A_{\GL}$ of linearly homogeneous sets. Questions about linear symmetry
become questions about a family of sets, and the tools of
Vereshchagin--Vit\'anyi~\cite{VV10} apply directly.
\item \emph{The restriction costs nothing extra.} A homogeneous set comes
with a certificate computable from it, so the symmetry hypothesis costs
no more than the set itself. Every difference between $\hGL_x$ and $h_x$
is therefore caused by \emph{which} sets are admissible, not by any
overhead of describing groups. This is a different kind of restriction
from strong models, where the models are ordinary sets but must be
computable from the data.
\item \emph{It explains why linear symmetry does not collapse.} For
$\Sym(\{0,1\}^n)$ every set is a cell, and symmetric models are all
models (Theorem~\ref{thm:sym}). For $\GL(n,2)$ only homogeneous sets
qualify. A string whose good models all have inhomogeneous dependency
structure cannot be explained well by linear symmetry. This is the gap
measured by the band (Section~\ref{sec:band}) and realised at its
maximum by Theorem~\ref{thm:maxgap}.
\item \emph{The class is small.} There are at most $2^{n^4+n}$ homogeneous
subsets of $\F_2^n$, since each is an orbit of one of at most $2^{n^4}$
subgroups. Arbitrary sets number $2^{2^n}$. This size difference is what
the counting proof of Theorem~\ref{thm:maxgap} exploits.
\item \emph{The class is not closed under halving.} Half of a homogeneous
set is usually not homogeneous, since the two halves need not look alike
internally. So the covering condition of acceptability, automatic for all
sets and easy for affine subspaces, is a genuine question for $\A_{\GL}$.
This is why the halving law for $\hGL_x$ is open
(Question~\ref{q:accept}).
\item \emph{Membership is a finite test.} Deciding whether a proposed set
is a legitimate linear-symmetry model requires no search over groups:
one checks that its dependency structure is point-transitive.
\end{enumerate}
Since linear maps fix $0$, homogeneous sets avoid $0$, and
$\A_{\GL}$ contains $\F_2^n\setminus\{0\}$ rather than the whole space
required in the definition of acceptability. This changes nothing up to
$O(1)$.

Single cells are constrained only by homogeneity, but whole symmetric
partitions are rigid (Example~\ref{ex:closed}(c)). This matters for the
strong version $h^{\GL}_{x,\varepsilon}$, which requires a \emph{cheap}
symmetric partition containing the cell (Question~\ref{q:rigid}).
Linear hypotheses also have little room to hide information in
placement: a subgroup of $\GL(n,2)$ is determined by its type, an
abstract group with a faithful module, up to $n^2$ bits
(Proposition~\ref{prop:GLtype}).

\subsection{Affine hypotheses}

Every affine subspace is a cell of a cheap symmetric partition. For
$W\setminus\{0\}$ this is Example~\ref{ex:closed}(c). For $a+W$ with
$a\notin W$, fix a complement $D$ of $W\oplus\langle a\rangle$, and for
$w\in W$ let $\tau_w$ fix $W\oplus D$ pointwise and send $a\mapsto a+w$.
The group $\{\tau_w\}\cong W$ has cells $\{v\}$ for $v\in W\oplus D$ and
$a+d+W$ for $d\in D$, at cost $C(W,a)+O(1)$. Writing $\haff_x$ for the
structure function over affine subspaces, we get, up to $O(\log n)$,
\begin{equation}\label{eq:chains}
h_x\ \le\ \hGL_x\ \le\ \haff_x,\qquad
h_x\ \le\ \hstr_{x,\varepsilon}\ \le\ h^{\GL}_{x,\varepsilon}.
\end{equation}

\section{The band}
\label{sec:band}

\begin{theorem}[Band]\label{thm:band}
Let $x\in\F_2^n\setminus\{0\}$ with $C(x)=\kappa$. Up to $O(\log n)$:
\begin{enumerate}[label=(\roman*),nosep]
\item $\kappa-\alpha\le h_x(\alpha)\le\hGL_x(\alpha)\le\haff_x(\alpha)\le
n-\alpha$ for $c\log n\le\alpha\le\kappa$;
\item $\hGL_x(c\log n)\le\log(2^n-1)$ and $\hGL_x(\kappa)=0$;
\item $0\le\hGL_x(\alpha)-h_x(\alpha)\le n-\kappa$, and
$\soph(x)\le\soph^{\GL}(x)\le\kappa$;
\item (relabelling) for $g\in\GL(n,2)$,
$\hGL_{gx}(\alpha+C(g)+c)\le\hGL_x(\alpha)$.
\end{enumerate}
\end{theorem}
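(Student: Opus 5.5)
Most of the statement follows from the chain~\eqref{eq:chains} together with Lemma~\ref{lem:basic}. The one new ingredient is the upper edge $\haff_x(\alpha)\le n-\alpha$, which I will get by building an explicit affine hypothesis.

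For (i), the lower inequalities are already available. The bound $\kappa-\alpha\le h_x(\alpha)$ is Lemma~\ref{lem:basic}(i). The inequality $h_x\le\hGL_x$ is Proposition~\ref{prop:basicP}(i). The inequality $\hGL_x\le\haff_x$ is the affine construction of Section~\ref{sec:GL}: every affine subspace $a+W\ni x$ (or $W\setminus\{0\}$) is a cell of a symmetric partition of cost $C(W,a)+O(1)$.

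For the upper edge, given $\alpha$ with $c\log n\le\alpha\le\kappa$, set $k=\lfloor\alpha\rfloor-c'\log n$. Let $W$ be the subspace of vectors whose first $k$ coordinates vanish. The affine subspace $x+W$ is then determined by $n$, $k$ and the first $k$ bits of $x$, so $C(x+W)\le k+O(\log n)\le\alpha$. It has size $2^{n-k}$, giving $\log|x+W|=n-\alpha+O(\log n)$.

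There is a small case split to make this a legitimate linear model. If $x\notin W$, the set $x+W$ is an affine subspace avoiding $0$, as required. If $x\in W$, the first $k$ bits of $x$ are zero, and I use $W\setminus\{0\}$ instead. This is a cell by Example~\ref{ex:closed}(c), at the same cost and size. That example needs $\dim W\ge2$ and $W\ne\F_2^n$; the degenerate cases only occur when $k$ is within $O(1)$ of $0$ or $n$, and there the bound is trivial up to $O(\log n)$. This construction is the only step with real content, and the main care needed is checking the $O(\log n)$ bookkeeping uniformly in $\alpha$.

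For (ii), the first bound comes from the partition $\{\F_2^n\setminus\{0\},\{0\}\}$, the orbit partition of $\GL(n,2)$ itself, which has cost $O(\log n)$. For the second, take the stabiliser $U=\stab_{\GL}(x)$, computable from $x$. Its orbit partition has $\{x\}$ as a cell (the $W\setminus\{0\}$ construction with $W=\langle x\rangle$ would not work, since a one-dimensional space is not covered by Example~\ref{ex:closed}(c)). By Proposition~\ref{prop:canon} and Theorem~\ref{thm:dict}, the pair $(\orb U,\{x\})$ costs $\kappa+O(\log n)$, so $\hGL_x(\kappa)=0$.

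For (iii), the nonnegativity $\hGL_x-h_x\ge0$ is (i). The other side combines the edges: $\hGL_x(\alpha)-h_x(\alpha)\le(n-\alpha)-(\kappa-\alpha)=n-\kappa$. The bound $\soph\le\soph^{\GL}$ is Proposition~\ref{prop:basicP}(iii). The bound $\soph^{\GL}(x)\le\kappa$ follows from (ii), since at $\alpha=\kappa$ we get $\hGL_x+\alpha=\kappa\le C(x)+c\log n$.

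For (iv), I apply the relabelling subsection of Section~\ref{sec:galois}. If $(p,B)$ is a $\Par_{\GL}$-model for $x$, then $(g\cdot p,gB)$ is a model for $gx$ of the same size. Its cost is at most $C(p,B)+C(g)+O(1)$, because $g\cdot p$ is symmetric with certificate $g\,\stab(p)\,g^{-1}$.
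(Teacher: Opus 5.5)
Your proof is correct and follows the paper's own argument. Part (i) comes from Lemma~\ref{lem:basic}(i), the chain~\eqref{eq:chains} and the cylinder subspace fixing the first $k$ coordinates of $x$; your case split for $x\in W$ is a detail the paper leaves implicit. Part (iii) combines the two edges, and (iv) is equivariance of the Galois connection.

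The one deviation is in (ii), where the paper simply uses the discrete partition, the orbit partition of the trivial subgroup, which lies in $\Par_{\GL}$ at hypothesis cost $O(\log n)$. Your worry that this route is unavailable is unfounded: for $W=\langle x\rangle$ the group $\GL(W)\times\{\mathrm{id}_D\}$ is trivial and its orbit partition is exactly the discrete one. Your $\stab_{\GL}(x)$ partition also gives $\hGL_x(\kappa)=0$. It is a much more expensive hypothesis, though, with $C(p)\approx\kappa$, since $x$ is its unique nonzero singleton cell.
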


\begin{proof}
(i) By Lemma~\ref{lem:basic}(i) and~\eqref{eq:chains}; for the last
inequality use the affine subspace of strings agreeing with $x$ in the
first $j$ coordinates. (ii) Use the partitions $\{\{0\},\F_2^n\setminus\{0\}\}$
and the discrete partition. (iii) Follows from (i) and (ii). (iv) By
equivariance of the Galois connection.
\end{proof}

\begin{figure}[h]
\centering
\begin{tikzpicture}[x=6.2cm,y=2.4cm]
\fill[violet!15] (0,2) -- (1.2,0.8) -- (1.2,0) -- (0,1.2) -- cycle;
\draw[->] (0,0) -- (1.5,0) node[below left]{$\alpha$};
\draw[->] (0,0) -- (0,2.25) node[above right]{log-size of best cell};
\draw[dashed,gray] (1.2,0) -- (1.2,2.1);
\draw[very thick,orange!85!black] (0,2) -- (1.2,0.8) -- (1.2,0);
\draw[very thick,blue!70!black] (0,1.2) -- (1.2,0) -- (1.45,0);
\node[left] at (0,2) {$n$};
\node[left] at (0,1.2) {$\kappa$};
\node[below] at (0,0) {$0$};
\node[below] at (1.2,0) {$\kappa=C(x)$};
\node at (0.45,1.18) {\small band of $\hGL_x$};
\node[orange!70!black,above right] at (0.35,1.66) {\small ceiling $n-\alpha$ (Thm.~\ref{thm:maxgap})};
\node[blue!70!black,below left] at (0.62,0.5) {\small floor $\kappa-\alpha$ (Cor.~\ref{cor:zerogap})};
\end{tikzpicture}
\caption{The band containing $\hGL_x$ (Theorem~\ref{thm:band}); both
edges are attained.}
\label{fig:band}
\end{figure}

\paragraph{All the structure functions together.}
Figure~\ref{fig:all} places the structure functions of this paper in one
picture, for a typical \emph{normal} string $x$ of length $n$ with
$\kappa=C(x)$. Normal strings are the common case. A stochastic string
has a simple, hence strong, sufficient model, so strangeness forces
non-stochasticity, and non-stochastic strings have small a priori
probability~\cite{VS17}. The figure shows six curves, of which three
coincide for normal strings, together with two guide lines and three
marked points. It should be read as follows.
\begin{enumerate}[label=(\arabic*)]
\item \emph{The guides.} The lower dashed line is the sufficiency line
$\kappa-\alpha$, and the upper one is the ceiling $n-\alpha$. By
Lemma~\ref{lem:basic}(i) and Theorem~\ref{thm:band}(i), every curve in the
figure lies between them for $c\log n\le\alpha\le\kappa$, up to $O(\log n)$.
A model on the sufficiency line is sufficient; a curve on the ceiling
means that the models of that kind are no better than those of a random
string of length $n$.
\item \emph{The order of the curves} (proved for every $x$). By
Proposition~\ref{prop:basicP}, up to $O(1)$ shifts,
\[
  h_x\le\hstr_{x,\varepsilon}\le h^{\GL}_{x,\varepsilon},\qquad
  h_x\le\hGL_x\le h^{\GL}_{x,\varepsilon}.
\]
The cheap linear-symmetry curve pays for two restrictions at once: its
models must be linear-symmetric \emph{and} strong.
\item \emph{The endpoints} (proved for every $x$). At the cheapest budget
every curve is at most $n$: the one-cell partition and the partition
$\{\{0\},\F_2^n\setminus\{0\}\}$ cost $O(\log n)$. At
$\alpha\approx\kappa$ every curve vanishes, since the discrete partition
is cheap and its cell $\{x\}$ costs $\kappa$. All differences between the
curves occur for $c\log n\le\alpha\le\kappa$.
\item \emph{Kolmogorov's curve} $h_x$ (solid blue) has the typical shape
of Figure~\ref{fig:sf}. It starts on the ceiling, decreases by at least
one bit per bit of budget (the halving law), with two sudden drops, and
first meets the sufficiency line at its minimal sufficient statistic
(blue dot), of complexity $\soph(x)$. From there it follows the
sufficiency line down to $\alpha=\kappa$.
\item \emph{The unrestricted symmetric curve} $h^{\Sym}_x$ (red dots)
lies on $h_x$ for \emph{every} string, not only for normal ones. This is
the collapse of Theorem~\ref{thm:sym}: every finite set is a cell of a
partition symmetric under $\Sym(\{0,1\}^n)$, at no extra cost. In
particular $\soph^{\Sym}(x)=\soph(x)$, and the two tick labels at the
blue dot coincide. The coincidence is uninformative about symmetry, since
by Proposition~\ref{prop:placement} the certificates realising arbitrary
models have trivial type.
\item \emph{The strong curve} $\hstr_{x,\varepsilon}$ (green dashes) also
lies on $h_x$. This is the definition of normality: every point of the
profile is attained by an $\varepsilon$-strong model, up to $\delta$. By
Theorem~\ref{thm:sym} the same curve is the \emph{cheap} unrestricted
symmetric curve $h^{\Sym}_{x,\varepsilon}$. For a strange string it would
separate from $h_x$ and lie above it, while $h^{\Sym}_x$ would still lie
on $h_x$.
\item \emph{The linear-symmetry curve} $\hGL_x$ (solid violet) uses cells
of symmetric partitions of $\GL(n,2)$ of any cost, that is, linearly
homogeneous sets (Theorem~\ref{thm:homog}). In the figure it follows the
ceiling up to $\alpha\approx\kappa/2$: at those budgets no linear
symmetry sees any structure of $x$. It then drops, decreases along a
segment of slope $-1$, drops again, and first meets the sufficiency line
at the violet dot, whose complexity is the symmetric sophistication
$\soph^{\GL}(x)$. By Theorem~\ref{thm:band}(iii),
$\soph(x)\le\soph^{\GL}(x)\le\kappa$.
\item \emph{The symmetry gap} (shaded) is the region between $h_x$ and
$\hGL_x$. At each budget it is the part of the regularity of $x$ that
linear symmetry does not see. Its horizontal extent at the sufficiency
line, $\soph^{\GL}(x)-\soph(x)$, is the extra budget symmetric
explanations need before they become sufficient. By
Corollary~\ref{cor:zerogap} the region can be empty, with $\hGL_x=h_x$. By
Theorem~\ref{thm:maxgap} it can fill the whole band, with $\hGL_x$ on the
ceiling up to $\alpha\approx\kappa$, even for stochastic normal strings.
\item \emph{The cheap linear-symmetry curve} $h^{\GL}_{x,\varepsilon}$
(orange dashes) uses only cells of symmetric partitions of complexity at
most $\varepsilon$. In the figure it agrees with $\hGL_x$ on the ceiling,
where the cylinder partitions fixing the first $j$ coordinates are cheap,
and lies above it afterwards. There the good linear-symmetric cells of $x$
exist only inside expensive symmetric partitions: the rigidity effect of
Question~\ref{q:rigid}. It meets the sufficiency line only at the
right end, just before $\alpha=\kappa$. Because $x$ is normal, this
separation from $\hGL_x$ is caused by symmetry, not by strangeness.
\item \emph{The marked points.} On the horizontal axis: $\soph(x)$,
equal to $\soph^{\Sym}(x)$, where $h_x$ and $h^{\Sym}_x$ become
sufficient; $\soph^{\GL}(x)$, where $\hGL_x$ becomes sufficient; and
$\kappa$, where all curves vanish. On the vertical axis: $n$, where all
curves start, and $\kappa$, where the sufficiency line starts.
\end{enumerate}
Items (1)--(6) and the inequality in (7) are theorems; item (10) only names the marked points. The particular
shapes of $\hGL_x$ and $h^{\GL}_{x,\varepsilon}$ are \emph{illustrative}.
Which shape is typical is open, and so is whether the rigidity separation
in (9) occurs at all. Whether the $\GL$ curves obey the halving law
(Question~\ref{q:accept}) is also open. If they do not, they may have
plateaus, which $h_x$ never has.

\begin{figure}[t]
\centering
\begin{tikzpicture}[x=1.65cm,y=0.46cm]
\fill[violet!13] (0,10) -- (3,7) -- (3,5) -- (4.5,3.5) -- (4.5,1.5) -- (6,0)
  -- (2.5,3.5) -- (2.5,5.3) -- (0.8,7) -- (0.8,9.2) -- cycle;
\draw[->] (0,0) -- (7.3,0) node[below left]{$\alpha$};
\draw[->] (0,0) -- (0,10.8) node[above right]{$\log$-size of best model};
\draw[dashed,gray] (0,6) -- (6,0);
\draw[dashed,gray!60] (0,10) -- (7,3);
\draw[dotted,gray] (2.5,0) -- (2.5,3.5);
\draw[dotted,gray] (4.5,0) -- (4.5,1.5);
\draw[very thick,violet!80!black] (0,10) -- (3,7) -- (3,5) -- (4.5,3.5)
  -- (4.5,1.5) -- (6,0);
\draw[very thick,dashed,orange!85!black] (0,10) -- (3,7) -- (4,6) -- (4,4.5)
  -- (5.3,3.2) -- (5.3,0.7) -- (6,0);
\draw[very thick,blue!70!black] (0,10) -- (0.8,9.2) -- (0.8,7) -- (2.5,5.3)
  -- (2.5,3.5) -- (6,0) -- (7,0);
\draw[thick,densely dashed,green!55!black] (0,10) -- (0.8,9.2) -- (0.8,7)
  -- (2.5,5.3) -- (2.5,3.5) -- (6,0) -- (7,0);
\draw[line width=1.6pt,red!75!black,dash pattern=on 0.2pt off 4pt,line cap=round]
  (0,10) -- (0.8,9.2) -- (0.8,7) -- (2.5,5.3) -- (2.5,3.5) -- (6,0) -- (7,0);
\fill[blue!70!black] (2.5,3.5) circle (2.4pt);
\fill[violet!80!black] (4.5,1.5) circle (2.4pt);
\node[left] at (0,10) {$n$};
\node[left] at (0,6) {$\kappa$};
\node[below] at (0,0) {$0$};
\node[below,align=center] at (2.5,0) {$\soph(x)$\\$=\soph^{\Sym}(x)$};
\node[below] at (4.5,0) {$\soph^{\GL}(x)$};
\node[below] at (6,0) {$\kappa$};
\node at (1.9,7.0) {\small symmetry gap};
\node[gray!60!black,anchor=west] at (6.1,3.5) {\small $n-\alpha$};
\node[gray!60!black,align=left,anchor=west] at (0.1,1.6) {\small sufficiency line\\ \small $\kappa-\alpha$};
\draw[gray!60!black,->,shorten >=2pt] (0.55,2.4) -- (0.8,5.1);
\end{tikzpicture}

\vspace{4pt}
\begin{tikzpicture}[every node/.style={font=\small,anchor=west}]
\draw[very thick,blue!70!black] (0,0) -- (0.6,0); \node at (0.7,0) {$h_x$};
\draw[line width=1.6pt,red!75!black,dash pattern=on 0.2pt off 4pt,line cap=round] (1.8,0) -- (2.4,0); \node at (2.5,0) {$h^{\Sym}_x$};
\draw[thick,densely dashed,green!55!black] (3.8,0) -- (4.4,0); \node at (4.5,0) {$\hstr_{x,\varepsilon}=h^{\Sym}_{x,\varepsilon}$};
\draw[very thick,violet!80!black] (7.4,0) -- (8.0,0); \node at (8.1,0) {$\hGL_x$};
\draw[very thick,dashed,orange!85!black] (9.4,0) -- (10.0,0); \node at (10.1,0) {$h^{\GL}_{x,\varepsilon}$};
\end{tikzpicture}
\caption{The structure functions for a typical normal string. The curves
$h_x$, $h^{\Sym}_x$ and $\hstr_{x,\varepsilon}=h^{\Sym}_{x,\varepsilon}$
coincide; the first two coincide for every string. The shaded region is
the symmetry gap. The ordering, the guides, the endpoints and the
coincidences are proved; the shapes of the two $\GL$ curves are
illustrative. See items (1)--(10) in the text.}
\label{fig:all}
\end{figure}

\paragraph{Halving as refinement.} Affine subspaces form an acceptable
class: a coset of dimension $d$ splits into at most $2|A|/c$ cosets of
size at most $c$. By Theorem~\ref{thm:VV}, $\haff_x$ obeys the halving
law. For $\hGL_x$ the question becomes one about the lattice $\Par_{\GL}$.
Given a symmetric $p$ with $x\in B\in p$, is there a cheap symmetric
refinement whose cell containing $x$ has size about $|B|/2$? The
algebra of Section~\ref{sec:galois} suggests meeting $p$ with a second
cheap hypothesis $q$: the cell of $x$ in $p\wedge_{\GL}q$ lies in the
intersection of its cells. Transitivity alone does not obstruct this.
$\GL(n,2)$ is $2$-transitive on nonzero vectors, so no subgroup between
$\stab(x)$ and $\GL(n,2)$ refines the orbit of $x$, but subgroups not
containing $\stab(x)$ can.

\begin{question}\label{q:accept}
Is the family of cells of $\Par_{\GL}$ acceptable, i.e.\ can every
cell of size $N$ be covered by $\mathrm{poly}(n)\,N/c$ cells of size at
most $c$? Equivalently for the halving law: does every cell admit
cheap symmetric refinements, for instance by meets? If not, does
$\hGL_x$ exhibit a plateau for some $x$?
\end{question}

\FloatBarrier
\section{The extremes of the band}
\label{sec:extremes}

\subsection{Zero gap: symmetry explains everything}

\begin{corollary}\label{cor:zerogap}
For every admissible shape $P$ there is a string $x$ with
$h_x\approx\hGL_x\approx\haff_x\approx P$, up to
$C(P)+O(\sqrt{n\log n})$, so $\soph^{\GL}(x)=\soph(x)$ to the same
accuracy.
\end{corollary}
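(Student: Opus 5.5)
Apply Theorem~\ref{thm:VV} to the family $\A_{\mathrm{aff}}$ of affine subspaces of $\F_2^n$. Then squeeze $\hGL_x$ between $h_x$ and $\haff_x$ using the chain~\eqref{eq:chains}.

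First I check that $\A_{\mathrm{aff}}$ is acceptable. It is enumerable, since it is finite and explicitly listable for each $n$, and it contains $\F_2^n$. For the covering condition, let $A=a+W$ with $\dim W=d$ and take $c<2^d$. Put $k=\lfloor\log c\rfloor$ and choose a $k$-dimensional subspace $W'\le W$. The $2^{d-k}$ cosets $a+w+W'$ partition $A$. Each has size $2^k\le c$, and there are $2^{d-k}\le 2|A|/c$ of them, as already noted in the ``halving as refinement'' paragraph.

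One technical point needs care. Cells of $\Par_{\GL}$ avoid $0$, and the linear cells $W\setminus\{0\}$ are punctured subspaces rather than cosets. So for $0\in a+W$ the affine cell used in~\eqref{eq:chains} is really $W\setminus\{0\}$. This changes $\log|B|$ by at most $1$ and leaves the family acceptable up to $O(1)$. Since $x\ne0$, the model containing $x$ is unaffected.

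Next, Theorem~\ref{thm:VV} gives, for each admissible shape $P$, a string $x$ with $h_x$ and $h^{\A_{\mathrm{aff}}}_x=\haff_x$ both within $C(P)+O(\sqrt{n\log n})$ of $P$. I need $x\neq0$. The zero string has complexity $O(\log n)$, so it only realises shapes with $C(x)=O(\log n)$. For such shapes the claim is trivial or the error term absorbs the difference; otherwise $x\neq0$ automatically.

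I then invoke~\eqref{eq:chains}, $h_x\le\hGL_x\le\haff_x$ up to $O(\log n)$. Its second inequality was justified in Section~\ref{sec:GL}: every affine subspace $a+W$ is a cell of a symmetric partition of cost $C(W,a)+O(1)$, hence of complexity at most $C(a+W)+O(\log n)$ together with the cell. Sandwiching gives $\hGL_x\approx P$ within the same accuracy, since $O(\log n)$ is dominated by $O(\sqrt{n\log n})$.

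Finally, $\soph$ and $\soph^{\GL}$ are read off from where each curve first meets the line $C(x)-\alpha$. Closeness of $h_x$ and $\hGL_x$ in the sense of Theorem~\ref{thm:VV} means the two profiles lie within the stated distance in both coordinates. So $\soph^{\GL}(x)$ and $\soph(x)$ agree up to that accuracy, provided the constant $c$ in the definition of sufficiency is adjusted by the same amount, as is customary.

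The main obstacle is the second step, specifically checking that the acceptability hypotheses of Theorem~\ref{thm:VV} match exactly. The issue is the presence of $0$: $\F_2^n$ and cosets through the origin are not literally cells of $\Par_{\GL}$. I would handle it by running Theorem~\ref{thm:VV} on genuine affine subspaces, and only afterwards converting cells through $0$ to punctured subspaces at a cost of one bit.
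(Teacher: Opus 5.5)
Your proposal is correct and follows the paper's proof: apply Theorem~\ref{thm:VV} to the acceptable class of affine subspaces, then sandwich $\hGL_x$ between $h_x$ and $\haff_x$ using~\eqref{eq:chains}. The paper leaves several points implicit that you handle correctly: the explicit covering bound $2|A|/c$, the treatment of $0$ through punctured subspaces at a cost of one bit, the degenerate case $x=0$, and the observation that the sophistication claim needs the sufficiency threshold enlarged by the closeness error.
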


\begin{proof}
Theorem~\ref{thm:VV} applied to affine subspaces, and~\eqref{eq:chains}.
\end{proof}

\subsection{Maximal gap: structure invisible to symmetry}

\begin{theorem}[Maximal gap]\label{thm:maxgap}
There is a constant $c$ such that for all $n$, all $c\log n\le m\le n$
and all $t\ge0$ there are strings $x\in\{0,1\}^n$ with:
\begin{enumerate}[label=(\alph*),nosep]
\item $C(x)=m\pm O(t+\log n)$ and $h_x(\alpha)=m-\alpha\pm O(t+\log n)$ for
$c\log n\le\alpha\le m$; $x$ is stochastic and
$(O(\log n),O(t+\log n))$-normal;
\item $\hGL_x(\alpha)\ge n-\alpha-t-O(\log n)$ for all
$\alpha\le m-t-c\log n$.
\end{enumerate}
In fact all but a $2^{-t}$ fraction of a set $S$ with $C(S)=O(\log n)$
and $|S|\approx2^m$ have these properties.
\end{theorem}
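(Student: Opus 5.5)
Fix a simple set $S\subseteq\{0,1\}^n$ with $C(S)=O(\log n)$ and $|S|\approx2^m$. A natural choice is the strings whose last $n-m$ bits are zero. We show that a random $x\in S$ has both properties, by a counting argument that uses the smallness of the class of linearly homogeneous sets (item (4) after Theorem~\ref{thm:homog}).

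For (a), all but a $2^{-t}$ fraction of $x\in S$ satisfy $C(x\mid S)\ge m-t$, so $C(x)=m\pm O(t+\log n)$. The model $S$ is sufficient at cost $O(\log n)$. Together with Lemma~\ref{lem:basic}(i), and with the cylinder subsets of $S$ that fix the first $\alpha$ free coordinates (these give $h_x(\alpha)\le m-\alpha$), this yields $h_x(\alpha)=m-\alpha\pm O(t+\log n)$. So $x$ is stochastic. Normality holds because those cylinder models are cells of $O(\log n)$-complexity partitions and hence strong (Proposition~\ref{prop:basicP}(ii)).

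For (b), we want few $x\in S$ that lie in a cell $B$ with $C(B)\le\alpha$ and $\log|B|<n-\alpha-t-c'\log n$. By Theorem~\ref{thm:homog}, such cells are linearly homogeneous sets of complexity at most $\alpha$, and there are at most $2^{\alpha+1}$ of them. The key estimate is a bound on $|B\cap S|$ for homogeneous $B$. We aim for
\[
|B\cap S|\le \mathrm{poly}(n)\,|B|\,|S|\,2^{-n}+2^{O(\log n)} .
\]
This says that a homogeneous set cannot concentrate on a fixed subspace much more than a random set would. Granting it, the bad $x$ number at most
\[
2^{\alpha+1}\bigl(\mathrm{poly}(n)\,2^{n-\alpha-t-c'\log n}\,2^{m-n}+\mathrm{poly}(n)\bigr),
\]
which is at most $2^{m-t}$ once $\alpha\le m-t-c\log n$ and $c'$ is large. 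So all but a $2^{-t}$ fraction of $S$ satisfy (b), and a union bound with (a) finishes.

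The main obstacle is the intersection estimate, and as stated it is false for the coordinate subspace $S$. The set $S\setminus\{0\}$ is itself homogeneous with complexity $O(\log n)$, so it would give $\hGL_x(O(\log n))\le m$. Therefore $S$ must be chosen so that it meets every simple homogeneous set only in a small proportion. My first attempt is to avoid structured $S$ and choose $S$ by a simple exhaustive search: the lexicographically first set of size $2^m$ whose intersection with every homogeneous $B$ of complexity at most $n$ is at most $\mathrm{poly}(n)|B|2^{m-n}+\mathrm{poly}(n)$. Such an $S$ exists because a random $2^m$-subset violates the bound for a fixed $B$ with probability doubly exponentially small (Chernoff), and there are at most $2^{n^4+n}$ candidate $B$. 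Since $S$ is determined by $n,m$, we still have $C(S)=O(\log n)$. With the search-based $S$, the cylinder models in (a) are no longer available. Instead, (a) uses $S$ itself together with the halving law of Lemma~\ref{lem:basic}(iii) to get $h_x(\alpha)\le m-\alpha$. Normality then requires strong models at each level, so $S$ should be built with a nested chain of balanced pieces $S=S_0\supset S_1\supset\cdots$. Each piece is chosen by the same search and halves the previous one, so membership is computable by a total program of length $O(\log n)$. The Chernoff bound has enough slack to impose the intersection condition on all $m$ levels simultaneously.
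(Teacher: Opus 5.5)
Once you discard the coordinate subspace (rightly, since $S\setminus\{0\}$ is itself a cheap homogeneous cell), your construction is essentially the paper's. The paper takes the lexicographically first $S$ of size about $2^m$ with $|S\cap B|<\max(2e\,2^{m-n}|B|,\,n^5)$ for every cell $B$ of $\Par_{\GL}$. Such an $S$ exists by Chernoff and a union bound over the at most $2^{n^4+n}$ cells, and (a) and (b) are read off for typical $x\in S$.

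As written, however, one step fails. Your search quantifies over homogeneous $B$ \emph{of complexity at most $n$}. The predicate $C(B)\le n$ is only enumerable, so no terminating search from $(n,m)$ can certify that a candidate $S$ is good for all such $B$. You then only get $C(S)\le n+O(\log n)$, using something like the number of such $B$ as advice. That loses $C(S)=O(\log n)$, on which all of (a) rests, including $C(x)\le m+O(\log n)$ and stochasticity. The fix is to quantify over \emph{all} homogeneous sets. This is decidable, since homogeneity is a finite test, or alternatively one can enumerate the orbits of all subgroups of $\GL(n,2)$. Your union bound over $2^{n^4+n}$ sets already pays for it.

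There are two smaller repairs.
\begin{enumerate}
\item The per-cell failure probability is doubly exponential only when $|B|2^{m-n}$ is exponentially large. For small cells, the bound $\Pr[|S\cap B|\ge R]\le 2^{-R}$ for $R\ge 2e\,\mathbb{E}|S\cap B|$ gives only $2^{-R}$. So your additive term must exceed $n^4+n$; the paper takes $n^5$.
\item Normality needs no re-searched nested pieces. Since $S$ is computable from $(n,m)$, cutting it in lexicographic order into $2^j$ consecutive pieces gives every $x\in S$ a model of complexity $j+O(\log n)$ and size about $2^{m-j}$. This model is computable from $x$ by a total program of length $O(\log n)$. The intersection condition plays no role in (a). Note also that a single descending chain $S_0\supset S_1\supset\cdots$ would not contain a typical $x$ beyond the first few levels.
\end{enumerate}

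For (b), your direct count of bad $x$ works, but it needs a union bound over the $O(n)$ values of $\alpha$ and over the bad set from (a). The paper argues pointwise instead. The union $U$ of cells with $C(p,B)\le\alpha$ and $|B|\le 2^s$ is enumerable from $(n,m,\alpha,s)$. So $x\in U\cap S$ would give $C(x)\le\log|U\cap S|+O(\log n)$, contradicting $C(x)\ge m-t$. Thus the single property $C(x)\ge m-t$ yields (a) and (b) for all $\alpha$ at once, which is exactly the ``all but a $2^{-t}$ fraction'' clause. Your count avoids needing enumerability of $U$; otherwise the two are equivalent up to constants.
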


\begin{proof}
\emph{Counting cells.} By Proposition~\ref{prop:GLtype}, or simply
because a group of order $N$ has a generating set of size $\log N$, there
are at most $2^{n^4}$ subgroups of $\GL(n,2)$, hence at most $2^{n^4+n}$
cells of symmetric partitions.

\emph{A set that no cell sees.} Put each string into a random $S$
independently with probability $2^{m-n}$. For a cell $B$ let
$R_B=\max(2e\,2^{m-n}|B|,n^5)$. The Chernoff bound
$\Pr[X\ge R]\le2^{-R}$ for $R\ge2e\,\mathbb{E}X$~\cite{MU} gives
$\Pr[|S\cap B|\ge R_B]\le2^{-n^5}$. A union bound, and concentration of
$|S|$, show that with positive probability
\begin{equation}\label{eq:pi}
2^{m-1}\le|S|\le2^{m+1}\quad\text{and}\quad
|S\cap B|<\max(2e\,2^{m-n}|B|,\,n^5)\ \text{for every cell }B.
\end{equation}
This property is decidable, so the lexicographically first such $S$ has
$C(S)=O(\log n)$.

\emph{Typical elements.} All but a $2^{-t}$ fraction of $x\in S$ have
$C(x)\ge m-t$; fix one.

(a) Cutting $S$ into $2^j$ consecutive pieces gives models of complexity
$j+O(\log n)$ and size at most $2^{m+1-j}$, each computable from $x$ by a
total program of length $O(\log n)$. With Lemma~\ref{lem:basic}(i) this
gives $h_x$ and normality.

(b) Fix $\alpha,s$. Fewer than $2^{\alpha+1}$ models $(p,B)$ have
$C(p,B)\le\alpha$, and the union $U$ of those cells with $|B|\le2^s$ is
enumerable from $n,m,\alpha,s$. By~\eqref{eq:pi},
$|U\cap S|\le2^{\alpha+1}\max(2e\,2^{m-n+s},n^5)$. If $x\in U$, then
\[
  m-t\le C(x)\le\log|U\cap S|+O(\log n)
  \le\alpha+\max(m-n+s,\ 5\log n)+O(\log n).
\]
For $\alpha\le m-t-c\log n$ the second branch is impossible, so
$s\ge n-\alpha-t-O(\log n)$.
\end{proof}

\begin{corollary}[Simple structure, invisible to symmetry]\label{cor:sep}
If moreover $m\le n-t-c'\log n$, these strings satisfy
$\soph(x)=O(\log n)$ but $\soph^{\GL}(x)\ge C(x)-O(t+\log n)$. They are
stochastic and normal, yet no cell of a symmetric partition costing
less than $C(x)-O(t+\log n)$ is sufficient: every sufficient symmetric
model essentially names $x$ itself. The gap between the two chains
in~\eqref{eq:chains} is a symmetry phenomenon, not a computability one.
\end{corollary}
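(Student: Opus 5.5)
The corollary follows from Theorem~\ref{thm:maxgap} by substituting the extra hypothesis $m\le n-t-c'\log n$ into part~(b) and comparing with the sufficiency condition. Fix $x$ as produced by Theorem~\ref{thm:maxgap}. Write $\kappa=C(x)=m\pm O(t+\log n)$ and let $D=O(t+\log n)$ absorb all error terms.

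\textbf{Stochasticity.} By part~(a), $h_x(\alpha)=m-\alpha\pm D$ already at $\alpha=c\log n$. So $h_x(c\log n)+c\log n\le\kappa+D$. Thus $\soph(x)=O(\log n)$, once the slack $D$ is absorbed into the sufficiency threshold. This requires the $t$-dependence to be tracked: either take $t=O(\log n)$, or state the sufficiency threshold as $C(x)+O(t+\log n)$, as the statement implicitly does. Normality is inherited directly from part~(a).

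\textbf{Lower bound on $\soph^{\GL}(x)$.} Suppose $(p,B)$ is a $\Par_{\GL}$-model with $C(p,B)=\alpha\le m-t-c\log n$. Part~(b) gives $\log|B|\ge n-\alpha-t-O(\log n)$, hence
\[
\log|B|+\alpha\ \ge\ n-t-O(\log n)\ \ge\ m+(c'-O(1))\log n ,
\]
using $m\le n-t-c'\log n$. Since $\kappa\le m+O(\log n)$, up to the $t$-slack in $C(x)$, choosing $c'$ large makes $\log|B|+\alpha>\kappa+c\log n$. So no such model is sufficient. Therefore every sufficient symmetric model has $\alpha>m-t-c\log n\ge\kappa-O(t+\log n)$, which gives $\soph^{\GL}(x)\ge C(x)-O(t+\log n)$.

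The upper bound $\soph^{\GL}(x)\le\kappa$ holds by Theorem~\ref{thm:band}(iii). Together these say that every sufficient symmetric cell essentially costs $C(x)$, that is, it names $x$.

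\textbf{Main obstacle: constant bookkeeping.} Part~(a) controls $C(x)$ only to within $O(t+\log n)$. The hidden constant in $C(x)\le m+O(\log n)$ from $x\in S$ is independent of $t$, since $C(x)\le\log|S|+C(S)+O(\log n)$. So the upper side of $\kappa$ carries no $t$ term, and $c'$ can be chosen after the constants of Theorem~\ref{thm:maxgap} are fixed. The lower side carries the $t$ loss, but it only enters the final $O(t+\log n)$.

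\textbf{Concluding remark.} By the chains~\eqref{eq:chains}, these strings are normal, so $\hstr_{x,\varepsilon}\approx h_x$, while $\hGL_x$ sits at the ceiling. The gap between the two chains therefore cannot be attributed to strong-model (computability) effects.
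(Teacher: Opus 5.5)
Your proposal is correct and is essentially the paper's proof. The paper's proof is the single line that part~(b) of Theorem~\ref{thm:maxgap}, together with $m\le n-t-c'\log n$, gives $\hGL_x(\alpha)+\alpha\ge n-t-O(\log n)>C(x)+c\log n$ for all $\alpha\le m-t-c\log n$. Your additional bookkeeping correctly makes explicit what the paper leaves implicit: the bound $C(x)\le m+O(\log n)$ from $x\in S$ carries no $t$-loss, so $c'$ can be chosen after the theorem's constants are fixed, and $\soph(x)=O(\log n)$ holds literally only for $t=O(\log n)$ or with the sufficiency threshold relaxed to $C(x)+O(t+\log n)$.
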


\begin{proof}
By (b), $\hGL_x(\alpha)+\alpha\ge n-t-O(\log n)>C(x)+c\log n$ for
$\alpha\le m-t-c\log n$.
\end{proof}

\begin{remark}
The argument only uses that the partition family has
$2^{\mathrm{poly}(n)}$ cells and is decidable. So every such family,
for instance $\Par_G$ for any ambient group of order $2^{\mathrm{poly}(n)}$,
has stochastic normal strings that it sees as antistochastic. The
construction is not explicit.
\end{remark}

\section{Coordinates on the space of permutation groups}
\label{sec:coords}

The Galois connection organises symmetry hypotheses into a lattice. To
move through that lattice, and to search it, we need coordinates. This
section shows that every permutation group is a Burnside ring element
(its \emph{type}) together with a permutation (its \emph{placement}),
compares the two ambient groups in these terms, and discusses what the
maximal gap theorem implies for searching the coordinates for good
models.

\subsection{The coordinate theorem}
\label{sec:coordthm}

We give coordinates on the set of all permutation groups on a finite
set $X$, $|X|=N$; in our setting $X=\{0,1\}^n$ and $N=2^n$. Via the
Galois connection of Section~\ref{sec:galois} these become coordinates on
symmetric partitions, and they are used to navigate both spaces. To keep $G$ for the ambient
group, we write $H$ for the abstract group.

A subgroup $U\le S_X$ is the image of a faithful action
$\rho\colon H\to S_X$ of an abstract group $H\cong U$. Up to isomorphism
of $H$-sets, the action is determined by its class in the Burnside ring
$B(H)$. Fix representatives $K_1,\dots,K_r$ of the conjugacy classes of
subgroups of $H$. Then
\[
  [X]_\rho=\sum_{i=1}^r a_i\,[H/K_i],\qquad a_i\ge0,\qquad
  \deg[X]_\rho:=\sum_i a_i\,|H:K_i|=N .
\]
The action is faithful iff $\bigcap_{a_i>0}\mathrm{core}_H(K_i)=1$, where
$\mathrm{core}_H(K)=\bigcap_{h}hKh^{-1}$ is the kernel of $H$ on $H/K$.
By Burnside's theorem, $b\in B(H)$ is determined by its \emph{marks}
$\varphi_K(b)=|X_b^K|$, the number of points fixed by $K$.

\begin{definition}[Canonical realisation]
Let $B_N(H)$ be the set of faithful $b=\sum a_i[H/K_i]$ of degree $N$.
Fix $H$ concretely (with a total order on its elements) and the
representatives $K_i$ canonically. For $b\in B_N(H)$ let $X_b$ be the
disjoint union of $a_i$ copies of $H/K_i$, ordered by $i$, then by copy,
then by the least element of each coset. Identify $X_b$ with $X$ in
lexicographic order. This gives a faithful action $\rho_b$ and a
\emph{canonical subgroup} $U_b=\rho_b(H)\le S_X$.
\end{definition}

\begin{theorem}[Coordinates]\label{thm:coords}
Let $\mathcal S_H(X)=\{U\le S_X : U\cong H\}$ and
\[
  \Phi_H\colon B_N(H)\times S_X\to\mathcal S_H(X),\qquad
  \Phi_H(b,\sigma)=\sigma\,U_b\,\sigma^{-1}.
\]
\begin{enumerate}[label=(\roman*),nosep]
\item $\Phi_H$ is surjective: every $U\cong H$ has coordinates $(b,\sigma)$.
\item For fixed $b$, $\Phi_H(b,\sigma)=\Phi_H(b,\sigma')$ iff
$\sigma^{-1}\sigma'\in N_{S_X}(U_b)$.
\item $\Phi_H(b,\sigma)$ and $\Phi_H(b',\sigma')$ are conjugate in $S_X$
iff $b'\in\mathrm{Aut}(H)\cdot b$, where $\alpha\in\mathrm{Aut}(H)$ acts by
$[H/K]\mapsto[H/\alpha(K)]$.
\end{enumerate}
Consequently
\[
  \mathcal S_H(X)\;=\;\bigsqcup_{[b]\in B_N(H)/\mathrm{Aut}(H)}
  S_X/N_{S_X}(U_b):
\]
the \emph{type} $[b]$ is a discrete coordinate labelling conjugacy
classes, and the \emph{placement} $\sigma$ is a coordinate on the
homogeneous space $S_X/N_{S_X}(U_b)$. The whole space of permutation
groups on $X$ is the disjoint union of these pieces over the isomorphism
classes of $H$ with a faithful action of degree $N$.
\end{theorem}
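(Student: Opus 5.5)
The argument is elementary permutation-group theory: I would reduce each claim to the classification of transitive $H$-sets by conjugacy classes of stabilisers. Throughout I would use one standard fact. Two faithful actions $\rho,\rho'\colon H\to S_X$ are isomorphic as $H$-sets, i.e.\ $\rho'(h)=\sigma\rho(h)\sigma^{-1}$ for some $\sigma\in S_X$ and all $h$, if and only if $[X]_\rho=[X]_{\rho'}$ in $B(H)$. This is the decomposition of an $H$-set into orbits, each isomorphic to $H/K$ with $K$ determined up to conjugacy.

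\emph{Step (i).} Given $U\cong H$, I choose an isomorphism $\rho\colon H\to U$ and regard it as a faithful action. Its class $b=[X]_\rho$ has degree $N$, and it is faithful because $\ker\rho=\bigcap_{a_i>0}\mathrm{core}_H(K_i)$ is trivial. So $b\in B_N(H)$. By the standard fact, $\rho$ and $\rho_b$ are isomorphic $H$-sets via some $\sigma\in S_X$, hence $U=\rho(H)=\sigma\rho_b(H)\sigma^{-1}=\Phi_H(b,\sigma)$.

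\emph{Step (ii).} The equality $\sigma U_b\sigma^{-1}=\sigma'U_b\sigma'^{-1}$ holds if and only if $\sigma^{-1}\sigma'$ normalises $U_b$. This is immediate.

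\emph{Step (iii).} This is the step needing care, because conjugacy of subgroups forgets the identification with $H$. For the "if" direction, suppose $b'=\alpha\cdot b$. Then $\rho_b\circ\alpha^{-1}$ is a faithful action of class $b'$. (Check: the stabiliser of $hK$ under the twisted action is $\alpha(hKh^{-1})$, so the orbit type becomes $[H/\alpha(K)]$.) Its image is still $U_b$. By the standard fact, $U_{b'}$ is $S_X$-conjugate to $U_b$, and so both $\Phi_H(b,\sigma)$ and $\Phi_H(b',\sigma')$ lie in one conjugacy class.

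For the "only if" direction, suppose $\tau U_b\tau^{-1}=U_{b'}$. Then $\rho_{b'}^{-1}\circ c_\tau\circ\rho_b$ is an automorphism $\alpha$ of $H$, where $c_\tau$ is conjugation by $\tau$. This gives
\[
\rho_{b'}(\alpha(h))=\tau\rho_b(h)\tau^{-1},
\]
so $\tau$ is an isomorphism of $H$-sets from $\rho_b$ to $\rho_{b'}\circ\alpha$. Comparing Burnside classes gives $b=\alpha^{-1}\cdot b'$. One can confirm this on marks: $\varphi_K(b)=\varphi_{\alpha(K)}(b')$. I also need the $\mathrm{Aut}(H)$-action on $B_N(H)$ to be well defined. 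It is, since $\alpha$ permutes conjugacy classes of subgroups and preserves indices and cores, hence preserves degree and faithfulness.

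\emph{Decomposition.} By (i) and the "if" part of (iii), $\mathcal S_H(X)$ is the union of the $S_X$-conjugacy classes of the $U_b$. By the "only if" part of (iii), distinct $\mathrm{Aut}(H)$-orbits $[b]$ give disjoint classes. Each class is the $S_X$-orbit of $U_b$ under conjugation, which by (ii) and orbit–stabiliser is $S_X/N_{S_X}(U_b)$. The final sentence of the theorem follows because every $U\le S_X$ lies in exactly one $\mathcal S_H(X)$, namely for $H$ its isomorphism type.

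The main obstacle is the bookkeeping in (iii). One must keep track of the direction of the twist by $\alpha$, and see that conjugacy of images corresponds to isomorphism of actions only up to $\mathrm{Aut}(H)$, not to equality of $b$. I would settle both points with the marks computation above.
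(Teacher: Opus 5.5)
Your proposal is correct and follows the paper's proof essentially step for step. Part (i) goes via an $H$-set isomorphism between $\rho_b$ and a chosen isomorphism $H\to U$, part (ii) is the definition of the normaliser, and part (iii) uses the automorphism $\alpha=\rho_{b'}^{-1}\circ c_\tau\circ\rho_b$ and the resulting twisted $H$-set isomorphism. You also spell out the converse of (iii), the well-definedness of the $\mathrm{Aut}(H)$-action, and the orbit--stabiliser derivation of the decomposition, which the paper leaves implicit.
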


\begin{proof}
(i) Fix an isomorphism $\psi\colon H\to U$. The $H$-set $X$ with action
$\psi$ is faithful of some class $b\in B_N(H)$, so there is an
isomorphism of $H$-sets $\sigma\colon(X,\rho_b)\to(X,\psi)$, i.e.
$\sigma\rho_b(h)\sigma^{-1}=\psi(h)$ for all $h$. Hence
$U=\sigma U_b\sigma^{-1}$. (ii) This is the definition of the
normaliser. (iii) If $\tau U_b\tau^{-1}=U_{b'}$, then
$\alpha=\rho_{b'}^{-1}\circ c_\tau\circ\rho_b$ is an automorphism of $H$,
and $\tau$ is an isomorphism of $H$-sets from $X_b$ to $X_{b'}$ twisted
by $\alpha$, so $b'=\alpha\cdot b$. The converse reverses the argument.
\end{proof}

\paragraph{Partitions and cells in coordinates.} The orbits of $U_b$ are
the summands of $X_b$, which are consecutive intervals of $X$. Let
$\lambda(b)$ be the integer partition of $N$ with part $|H:K_i|$
repeated $a_i$ times, and $p_{\lambda(b)}$ the corresponding partition of
$X$ into consecutive intervals. Then
\[
  \orb\big(\Phi_H(b,\sigma)\big)=\sigma\cdot p_{\lambda(b)} .
\]
So partitions have coordinates $(\lambda,\sigma)$, with $\sigma$ defined
modulo the stabiliser of $p_\lambda$ in $S_X$. The projection
$(b,\sigma)\mapsto(\lambda(b),\sigma)$ is $\orb$ written in coordinates:
it forgets the structure of $b$ beyond its cell sizes. If
$\sigma^{-1}(x)$ lies in a summand of type $H/K_i$, then the cell of $x$
has size $|H:K_i|$, its stabiliser is conjugate to $K_i$, and
\[
  \Pr_{u\in U}[ux=x]=\frac{|K_i|}{|H|} .
\]
The marks $\varphi_K(b)$ give the whole distribution of cell sizes and
fixed points over $X$; Proposition~\ref{prop:burnside} is the pointwise
case.

\begin{proposition}[Complexity in coordinates]\label{prop:coordcost}
Let $U=\Phi_H(b,\sigma)$, with $b$ chosen canonically in its
$\mathrm{Aut}(H)$-orbit. Then, up to $O(\log C(U))$,
\[
  C(U)\;=\;C(H,b)\;+\;\min\{\,C(\sigma'\mid H,b) : \Phi_H(b,\sigma')=U\,\}.
\]
\end{proposition}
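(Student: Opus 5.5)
We prove the two inequalities separately. Throughout, the additive error is $O(\log C(U))$, and it pays for self-delimiting encodings of the pieces being concatenated.

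\emph{Upper bound.} Given a shortest description of $(H,b)$ and a shortest conditional description of some $\sigma'$ with $\Phi_H(b,\sigma')=U$, we recover $U$ by a fixed procedure:
\begin{enumerate}[label=(\arabic*),nosep]
\item from $H$ (with its canonical total order and canonical subgroup representatives $K_i$) and $b$, build $X_b$ and the canonical subgroup $U_b$, which is computable by the Canonical realisation definition;
\item form $\sigma' U_b\sigma'^{-1}$ and output a generator list.
\end{enumerate}
Concatenating the two descriptions costs the sum plus the self-delimiting overhead $O(\log C(H,b)+\log C(\sigma'\mid H,b))$. Both terms are at most $C(U)+O(1)$, which is shown in the next paragraph, so the overhead is $O(\log C(U))$ as required.

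\emph{Lower bound.} From a generator list of $U$ we can compute canonical choices of everything.
\begin{enumerate}[label=(\arabic*),nosep]
\item Enumerate $U$ and compute its multiplication table. Take $H$ to be the lexicographically least concrete group (in the fixed presentation format) isomorphic to $U$; this is decidable by exhaustive search. This gives $C(H\mid U)=O(1)$.
\item Search for an isomorphism $\psi\colon H\to U$ and compute the $H$-set class of $X$ under $\psi$. Replace it by the canonical representative $b$ of its $\mathrm{Aut}(H)$-orbit, twisting $\psi$ by the corresponding automorphism. This gives $C(H,b\mid U)=O(1)$, and hence $C(H,b)\le C(U)+O(1)$.
\item Given $U$, $H$ and $b$, search $S_X$ for the first $\sigma'$ with $\sigma' U_b\sigma'^{-1}=U$; one exists by Theorem~\ref{thm:coords}(i). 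Such a $\sigma'$ is computable from $U$, so $C(\sigma'\mid H,b,U)=O(1)$.
\end{enumerate}
Now apply symmetry of information: $C(U)=C(U,H,b)\ge C(H,b)+C(U\mid H,b)-O(\log C(U))$. By step (3), $C(U\mid H,b)\ge C(\sigma'\mid H,b)-O(1)$, which is at least the minimum in the statement.

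\emph{Main obstacle.} Step (2) needs $b$ to be independent of the choice of $\psi$. Different isomorphisms $H\to U$ differ by an element of $\mathrm{Aut}(H)$, and by Theorem~\ref{thm:coords}(iii) they yield $b$'s in the same $\mathrm{Aut}(H)$-orbit. Choosing the canonical orbit representative therefore makes $b$ a function of $U$ alone.

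The second point is the logarithmic precision of symmetry of information. The error there is $O(\log C(U,H,b))$, and since $C(U,H,b)=C(U)\pm O(1)$, this is $O(\log C(U))$ and matches the stated tolerance. If $\Phi_H(b,\sigma')=U$ has several solutions, they differ by $N_{S_X}(U_b)$ by Theorem~\ref{thm:coords}(ii). The minimum over $\sigma'$ in the statement absorbs this non-uniqueness, so the argument does not depend on which solution is chosen.
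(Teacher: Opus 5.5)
Your proof is correct and follows the paper's own route. You recover $H$, the canonical $b$ and a placement from $U$, rebuild $U$ from these coordinates, and apply symmetry of information. The one cosmetic difference is that the paper takes the least element of the coset $\sigma N_{S_X}(U_b)$, which is computable from any coset member together with $(H,b)$ and so attains the minimum up to $O(1)$. You instead prove the two inequalities separately and do not need that observation.
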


\begin{proof}
From $U$ one computes $H$ up to isomorphism, the canonical $b$, and the
least $\sigma$ in the coset $\sigma N_{S_X}(U_b)$. This least element is
computable from any member of the coset together with $(H,b)$. Apply
symmetry of information.
\end{proof}

\paragraph{Navigation.} Two basic moves act on coordinates.
\begin{itemize}[nosep]
\item \emph{Relabelling}: $(b,\sigma)\mapsto(b,\pi\sigma)$ conjugates the
group by $\pi$ and translates its partition, at cost at most $C(\pi)$.
\item \emph{Restriction} to a subgroup $L\le H$: the subgroup
$\sigma\rho_b(L)\sigma^{-1}$ has coordinates
$(\mathrm{Res}^H_L b,\ \sigma\tau)$ in $\mathcal S_L(X)$, where $\tau$ is
computable from $(H,L,b)$. Its partition refines that of $U$ cell by
cell, according to the Mackey formula
\[
  \mathrm{Res}^H_L\,[H/K]=\sum_{LhK\in L\backslash H/K}
  \big[L/(L\cap hKh^{-1})\big],
\]
so a cell of size $|H:K|$ splits into cells of sizes
$|L:L\cap hKh^{-1}|$.
\end{itemize}
Restriction moves down the subgroup lattice and refines hypotheses;
relabelling moves within a conjugacy class. In coordinates, the halving
question of Section~\ref{sec:band} asks for a cheap subgroup $L\le H$
whose Mackey decomposition splits the cell of $x$ roughly in half.

\begin{remark}
For subgroups of $\GL(n,2)$ the same construction applies with $B(H)$
replaced by the isomorphism classes of faithful $\F_2H$-modules of
dimension $n$ (the Green ring), and $S_X$ by $\GL(n,2)$. The fibres are
then $\GL(n,2)/N_{\GL(n,2)}(U_b)$, and the placement costs at most $n^2$
bits (Proposition~\ref{prop:GLtype}).
\end{remark}

\subsection{Types and placements: $S_X$ versus $\GL(n,2)$}
\label{sec:types}

The coordinates separate what a symmetry hypothesis \emph{is} (its type)
from \emph{where it sits} (its placement). The two ambient groups behave
very differently.

\begin{proposition}[Placement dominates]\label{prop:placement}
Let $U_S=\Sym(S)\times1$ be the Young subgroup of a set $S\subseteq X$ with
$|S|=k\ge2$. In the coordinates of Theorem~\ref{thm:coords}, its type is
$H=S_k$, $b=[S_k/S_{k-1}]+(2^n-k)[S_k/S_k]$, and costs $O(n)$, while its
placement costs
\[
  \min\{C(\sigma\mid H,b) : \Phi_H(b,\sigma)=U_S\}\;=\;C(S)\pm O(n).
\]
\end{proposition}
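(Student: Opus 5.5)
We first identify the type, then bound the placement cost from both sides, working throughout with the coordinates of Theorem~\ref{thm:coords}.

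\emph{Type.} The group $U_S$ is isomorphic to $S_k$. Its orbits are $S$, on which it acts as the natural action $S_k/S_{k-1}$, together with the $2^n-k$ fixed points, each of type $S_k/S_k$. This gives $b$. The action is faithful because $\mathrm{core}(S_{k-1})=1$ for $k\ge 3$; for $k=2$ we also have $S_1=1$. Up to isomorphism, $H$ is specified by $k$, and $b$ by $k$ and $n$, so $C(H,b)=O(\log n)$. This is at most the claimed $O(n)$.

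\emph{Upper bound on placement.} The canonical group $U_b$ is $\Sym(I)$, where $I$ is the initial interval of the first $k$ strings in lexicographic order. This holds because the summand $[S_k/S_{k-1}]$ comes first in the canonical ordering of $X_b$. Given $S$, let $\sigma$ be the order-preserving bijection from $I$ onto $S$, extended by the order-preserving bijection from $X\setminus I$ onto $X\setminus S$. Then $\sigma U_b\sigma^{-1}=\Sym(S)\times 1=U_S$. Since $\sigma$ is computable from $S$ (and $n$), we get $C(\sigma\mid H,b)\le C(S)+O(\log n)$. The minimum is therefore at most $C(S)+O(n)$.

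\emph{Lower bound.} Take any $\sigma'$ with $\Phi_H(b,\sigma')=U_S$. The set $S$ is recoverable from $\sigma'$ and $(H,b)$: we have $S=\sigma'(I)$, and $I$ is determined by $k$. More intrinsically, $S$ is the support of $\sigma' U_b\sigma'^{-1}$, which is computable. Hence
\[
C(S)\le C(\sigma'\mid H,b)+C(H,b)+O(1)\le C(\sigma'\mid H,b)+O(\log n).
\]
Combining the two bounds gives equality up to $O(\log n)\subseteq O(n)$.

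\emph{Main subtlety.} The delicate point is checking that the canonical realisation really places the nontrivial orbit first. If instead the representative ordering put $S_k$ (the fixed points) first, then $I$ would be a final interval. This changes nothing, since $I$ is still computable from $(k,n)$. The other thing to get right is the conditioning in $C(\sigma\mid H,b)$: it must be made explicit that $n$ is available from $b$ through the degree, so that conditional costs differ from unconditional ones by only $O(\log n)$. The stated $O(n)$ slack comfortably absorbs all of these terms.
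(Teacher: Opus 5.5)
Your direct argument proves the stated $\pm O(n)$ bound, but one step is false: $C(H,b)=O(\log n)$. The degree $k=|S|$ can be as large as $2^n$, so describing $k$ (and hence $H=S_k$ and $b$) can take about $n$ bits. This is why the statement says the type costs $O(n)$.

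As a result, your conclusion ``equality up to $O(\log n)$'' is wrong. Here is a counterexample. Take $S=I$, the canonical interval, with $k\le 2^n$ chosen so that $C(k)\ge n-O(1)$. Then $\Phi_H(b,\mathrm{id})=U_S$, so the placement costs $O(1)$. But $C(S)\ge C(k)-O(1)\approx n$.

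The same mistake appears in your ``main subtlety'' paragraph. Conditioning on $(H,b)$ gives you $k$ as well as $n$, and $k$ can be worth $n$ bits. So conditional and unconditional costs can differ by $\Theta(n)$, not $O(\log n)$.

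The repair is immediate. In the lower bound, use $C(H,b)\le C(k)+O(\log n)\le n+O(\log n)$. This gives $C(S)\le C(\sigma'\mid H,b)+O(n)$, which is exactly the claimed bound.

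With that correction, your route differs from the paper's. The paper argues as follows:
\begin{enumerate}
\item $C(U_S)=C(S)\pm O(1)$, because $S$ is the unique cell of size greater than one.
\item Proposition~\ref{prop:coordcost} splits $C(U)$ into type plus placement, via symmetry of information with error $O(\log C(U))$.
\item Since $C(U_S)\le n2^n+O(1)$, that error is $O(n)$.
\end{enumerate}
You bypass symmetry of information instead. You exhibit a placement computable from $S$, and you recover $S=\sigma'(I)$ from any valid placement $\sigma'$ together with $(H,b)$.

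Your route is more elementary, and once the type cost is handled correctly it is sharper. It gives
$\min\{C(\sigma\mid H,b):\Phi_H(b,\sigma)=U_S\}=C(S\mid H,b)\pm O(1)=C(S\mid k,n)\pm O(1)$. So the placement never costs more than $C(S)+O(1)$. The entire $O(n)$ discrepancy is one-sided and comes from $C(k)$, not from a logarithmic error term.
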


\begin{proof}
$S$ is the unique cell of size greater than one, so
$C(U_S)=C(S)\pm O(1)$. The type is described by $k$ and $n$. Since
$C(U_S)\le n2^n+O(1)$, the error term in
Proposition~\ref{prop:coordcost} is $O(n)$.
\end{proof}

So the explanations that realise arbitrary Kolmogorov models are
\emph{structurally trivial symmetries placed in a complicated way}. A
symmetry explanation is informative only when its information lies in
the type. The placement for $\Sym$ can cost up to
$\log(2^n!)\approx n2^n$ bits, and this is what makes the collapse
possible.

\begin{proposition}[Type determines the hypothesis up to $n^2$ bits]\label{prop:GLtype}
Let $U\le\GL(n,2)$ have coordinates $(b,g)$ as in the remark at the end of
Section~\ref{sec:coordthm}: $b$ is the class of a faithful
$\F_2H$-module of dimension $n$ and $g\in\GL(n,2)$ is a placement. Then,
up to $O(\log n)$,
\[
  C(H,b)\ \le\ C(U)\ \le\ C(H,b)+n^2 .
\]
Consequently the number of symmetric partitions in $\Par_{\GL}$ is at
most the number of types times $|\GL(n,2)|<2^{n^2}$.
\end{proposition}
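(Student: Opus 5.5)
The proof follows the template of Proposition~\ref{prop:coordcost}, transported to the linear setting of the remark at the end of Section~\ref{sec:coordthm}. There $H$ is an abstract group, $b$ is the isomorphism class of a faithful $\F_2H$-module of dimension $n$, and fixing a basis gives a canonical representation $\rho_b\colon H\to\GL(n,2)$ with canonical subgroup $U_b=\rho_b(H)$. Every $U\le\GL(n,2)$ with $U\cong H$ is $gU_bg^{-1}$ for some $g\in\GL(n,2)$. The argument for this is the one in Theorem~\ref{thm:coords}(i): an isomorphism $\psi\colon H\to U$ makes $\F_2^n$ a faithful module of some class $b$, and a module isomorphism is an invertible linear map $g$ intertwining $\rho_b$ and $\psi$. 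As in Proposition~\ref{prop:coordcost}, $b$ is taken canonical in its $\mathrm{Aut}(H)$-orbit.

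For the lower bound, I will compute $(H,b)$ from $U$ and $n$. A generator list of $U$ gives the multiplication table of $U$, and hence $H$ up to isomorphism; I take the canonical presentation, for instance the lexicographically least multiplication table of a group of order $|U|$ isomorphic to $U$, found by exhaustive search. The inclusion $U\le\GL(n,2)$ is itself a faithful $n$-dimensional module. Its class, made canonical under $\mathrm{Aut}(H)$, is found by enumerating the finitely many faithful representations $H\to\GL(n,2)$ and testing isomorphism with the inclusion twisted by automorphisms. All of these searches are finite and computable from $U$ and $n$, so $C(H,b)\le C(U)+O(\log n)$.

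For the upper bound, I will reconstruct $U$ from $(H,b)$ and the placement. The canonical module $\rho_b$ is computable from $(H,b)$, so $U_b$ is too, and $U=gU_bg^{-1}$. A matrix $g\in\GL(n,2)$ is specified by its $n^2$ entries, so $C(U)\le C(H,b)+n^2+O(\log n)$, where the logarithmic term pays for $n$ and for delimiting the two parts.

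For the counting consequence, every subgroup is determined by a type and a placement $g$, and $|\GL(n,2)|<2^{n^2}$. So the number of subgroups, and hence the number of closed subgroups, is at most the number of types times $2^{n^2}$. By Proposition~\ref{prop:galois}, closed subgroups are in bijection with symmetric partitions, which gives the bound on $|\Par_{\GL}|$. The main obstacle is making the canonical choice of $(H,b)$ precise and computable: choosing canonical representatives of isomorphism classes of groups and modules, and handling the $\mathrm{Aut}(H)$-twist. I would settle this by brute-force lexicographic search over all finite data of the relevant bounded size, since all we need is computability, not efficiency.
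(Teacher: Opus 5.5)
Your proposal is correct and follows the paper's proof. For the lower bound you compute the type $(H,b)$ from $U$ by finite search, and for the upper bound you write $U=gU_bg^{-1}$, with $U_b$ computable from the type and $g$ an $n\times n$ matrix; you also spell out the canonical choices and obtain the counting bound from the bijection between closed subgroups and symmetric partitions (Proposition~\ref{prop:galois}), both of which the paper leaves implicit.
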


\begin{proof}
The type is computable from $U$. Conversely $U=gU_bg^{-1}$, where $U_b$
is computable from the type and $g$ is an $n\times n$ matrix.
\end{proof}

Contrast Proposition~\ref{prop:placement}: in $S_X$ the placement can
carry almost all of the information, whereas in $\GL(n,2)$ it carries at
most $n^2$ bits. A linear symmetry hypothesis is essentially an abstract
group with a module, and it cannot hide arbitrary information in its
placement.

\subsection{What the maximal gap theorem means for search}
\label{sec:search}

The coordinates make the search for symmetric models concrete. A
hypothesis is a type $(H,b)$ together with a placement, and one moves
through the space by relabelling, by restriction with its Mackey
splitting of cells, and by meets with other cheap hypotheses. It is
natural to try to find good models of a given string $x$ this way, by
exhaustive, heuristic or learned search. Theorem~\ref{thm:maxgap} sets the
limits of any such procedure.

\paragraph{Failure can be forced by the class, not the search.} For the
strings of Theorem~\ref{thm:maxgap}, no cell of a linear-symmetric
partition costing much less than $C(x)$ is a good model. No search
strategy over $\Par_{\GL}$, however it navigates the coordinates, can find
one, because none exists. Conversely, a model found by search only gives
an \emph{upper} bound on $\hGL_x$. Since structure functions are not
computable, no search yields matching lower bounds.

\paragraph{Failure says little about $x$.} The strings of
Theorem~\ref{thm:maxgap} are stochastic and normal. Their sufficient
statistics are simple, and even computable from $x$ by a short total
program (Corollary~\ref{cor:sep}). So ``no good symmetric model found''
means ``no linear-symmetric structure found''. It is no evidence that $x$
is random, nor that $x$ is strange.

\paragraph{Permutation-group coordinates are complete but uninformative.}
For the ambient group $S_X$, search is complete in principle: by
Theorem~\ref{thm:sym}, every Kolmogorov model is a cell of a symmetric
partition. But by Proposition~\ref{prop:placement}, the certificates that
realise arbitrary models have trivial type, so the search would take place
almost entirely among placements. The homogeneous spaces
$S_X/N_{S_X}(U_b)$ can have size close to $(2^n)!$, and searching them is
searching arbitrary finite sets under another name. The group structure
helps only for models whose information lies mainly in the type.

\paragraph{Matrix-group coordinates are searchable but blind.} For
$\GL(n,2)$, the placement costs at most $n^2$ bits
(Proposition~\ref{prop:GLtype}). The effective search space is therefore
the discrete set of types, small abstract groups with faithful modules,
times at most $2^{n^2}$ placements: altogether $2^{\mathrm{poly}(n)}$
hypotheses. This finiteness is what makes the search feasible, and it is
exactly what the proof of Theorem~\ref{thm:maxgap} exploits: any family of
$2^{\mathrm{poly}(n)}$ decidable cells is blind to some stochastic normal
strings. The same holds for $\mathrm{AGL}(n,2)$, for coordinate
permutations, and for any finite combination of such classes (the remark
after Corollary~\ref{cor:sep}). In short:
\begin{quote}
\emph{A coordinate system in which the space of symmetry hypotheses is
small enough to search is small enough to miss simple structure.}
\end{quote}
Completeness requires the full group $S_X$, where the hypotheses stop
being meaningfully symmetric.

\paragraph{What search is good for.} None of this makes symmetric search
useless. Theorem~\ref{thm:maxgap} is non-constructive. It shows that blind
strings exist, not how often they occur in data of interest, and if such
data tend to have symmetric regularities, the search target exists
(Corollary~\ref{cor:zerogap}) and finding it is a genuine discovery. The
theorem changes what a search should report. It should not report ``the
structure of $x$''. It should report \emph{how much of the structure of $x$
is symmetric}: the best model found, with its deficiency
$C(p,B)+\log|B|-C(x)$, where $C(x)$ is estimated by a compressor. A
deficiency near the ceiling of Theorem~\ref{thm:band} means that $x$ looks
random to linear symmetry. Four practical consequences follow.
\begin{enumerate}[label=(\alph*),nosep]
\item \emph{Pair symmetric search with strong-model search.} Blind
strings can be normal, so a search over all simple partitions can succeed
where the symmetric search fails. The difference between the two results
estimates $\soph^{\GL}(x)-\soph(x)$.
\item \emph{Search types first.} For $\GL(n,2)$, enumerate small types
$(H,b)$ and only then optimise placements within each type.
\item \emph{Refine by restriction and meets.} Mackey splitting and meets
are the natural moves for halving the cell of $x$ (Question~\ref{q:accept}).
\item \emph{Treat stalls as possibly genuine.} If $\hGL_x$ violates the
halving law, a search that stalls at some cell size may have reached a real
plateau of $\hGL_x$, not a local optimum of the algorithm.
\end{enumerate}
Strangeness is rare, since strange strings are non-stochastic and
non-stochastic strings have small a priori probability~\cite{VS17}.
Blindness to symmetry is plausibly common among stochastic strings, because
a symmetric family has only $2^{\mathrm{poly}(n)}$ cells. So in practice
the gap between symmetric and arbitrary models, not the gap between strong
and arbitrary ones, is the dominant limitation.

\section{Open problems}
\label{sec:open}

\begin{enumerate}[label=(\arabic*)]
\item \emph{Halving as refinement.} Resolve Question~\ref{q:accept}: does
$\Par_{\GL}$ admit cheap refinements of every cell, for instance by
meets with cheap hypotheses, or by restriction moves whose Mackey
decomposition halves the cell of $x$ (Section~\ref{sec:coords})?
\item \emph{Navigation.} Use the coordinates $(b,\sigma)$ and the moves
of restriction and relabelling to search the space of symmetric
partitions for good models of a given string. Which moves are cheap, and
how does the structure function change along them?
\item \emph{Rigidity.}
\begin{question}\label{q:rigid}
Is there a string $x$ for which $\hGL_x$ and $\hstr_{x,\varepsilon}$ are
both small at some level, yet $h^{\GL}_{x,\varepsilon}$ is large there?
Such an $x$ would have a good linearly homogeneous model and a good
strong model, but no cheap symmetric partition, a separation caused
by the rigidity of $\Par_{\GL}$ rather than by computability.
\end{question}
\item \emph{Type-dominated explanations.} Charge the type and the
placement separately, and define structure functions in which the
placement cost is bounded. Which strings have sufficient symmetric
explanations whose information lies mainly in the type? For $\Sym$,
Proposition~\ref{prop:placement} shows that arbitrary models need
placement. For $\GL(n,2)$, how do the at most $n^2$ bits of placement
trade against the type in optimal models?
\item \emph{Counting symmetric partitions.} Bound the number of
$p\in\Par_{\GL}$ with $C(p)\le\varepsilon$ and given cell sizes, using
types (Burnside and Green rings) and placements. This would sharpen
Theorem~\ref{thm:maxgap}, whose counting ranges over all subgroups.
\item \emph{Realizability.} Which pairs $(h_x,\hGL_x)$ occur? Conjecture:
every pair $(P,Q)$ with $P$ admissible, $Q$ non-increasing,
$P\le Q\le n-\alpha$ and $Q$ vanishing where $P$ does, up to $o(n)$; in
particular every value of $\soph^{\GL}(x)$ between $\soph(x)$ and
$C(x)$.
\item \emph{Explicit extremal strings.} Find explicit stochastic strings
with $\soph^{\GL}(x)\approx C(x)$. Candidates are $(y,f(y))$ for simple
$f$ whose graph has neither large Walsh coefficients nor large linear
symmetry groups.
\item \emph{Canonicity and heredity.} Is a sufficient cell of a cheap
symmetric partition canonical, in the sense of Vereshchagin's theorem
for strong minimal sufficient statistics~\cite{V15}? Is the code of an
optimal symmetric model of a string with $\hGL_x\approx h_x$ again of
this kind, as for normal strings~\cite{M16}?
\item \emph{Other ambient groups.} Develop the lattices $\Par_G$ for
the affine group $\mathrm{AGL}(n,2)$, for coordinate permutations $S_n$,
and for groups of polynomial automorphisms, and determine which
regularities each sees.
\end{enumerate}

\end{document}